\newcounter{myequation}
\newtheorem{theorem}{Theorem}
\newtheorem{lemma}[theorem]{Lemma}
\newtheorem{proposition}[theorem]{Proposition}
\newtheorem{definition}[theorem]{Definition}
 \definecolor{green}{rgb}{0.0, 0.5, 0.5}
\newcommand{\R}{\mathbb{R}}
\newcommand{\C}{\mathbb{C}}
\newcommand{\Z}{\mathbb{Z}}
\newcommand{\cL}{\mathcal{L}}
\newcommand{\astlo}{\mathbb A}
\newcommand{\nc}{\newcommand}
\nc{\al}{\alpha}
\nc{\del}{\delta}
\nc{\h}{\delta}
\nc{\Gam}{\Gamma}
\nc{\G}{\Gamma}
\nc{\et}{\eta} 
\nc{\g}{\gamma}
\nc{\gam}{\gamma}
\nc{\ka}{\kappa}
\nc{\lam}{\lambda}
\nc{\Lam}{\Lambda}
\nc{\Om}{\Omega}
\nc{\om}{\omega}
\nc{\ta}{\tau}
\nc{\w}{\omega}
\nc{\io}{\iota}
\nc{\z}{\zeta}
\nc{\s}{\sigma}
\nc{\Si}{\Sigma}
\nc{\vphi}{\varphi}
\nc{\e}{\epsilon}
\newcommand{\eps}{\epsilon}
\nc{\ran}{\rangle}
\nc{\lan}{\langle}
\newcommand{\re}{\operatorname{Re}}
\newcommand{\im}{{\rm Im}}
\nc{\bfone}{{\bf 1}}
\newcommand{\beq}{\begin{equation}}
\newcommand{\eeq}{\end{equation}}
\nc{\dd}{\mathrm{d}}
\newcommand{\DETAILS}[1]{}
\newcommand{\x}{|x|}
\newcommand{\y}{|y|}
\begin{document}

 \title{Maximal speed for macroscopic particle transport in the Bose-Hubbard model}

 \author{J\'er\'emy Faupin} \affiliation{Institut Elie Cartan de Lorraine, Université de Lorraine, 57045 Metz Cedex 1, France} \email{jeremy.faupin@univ-lorraine.fr}
 
 \author{Marius Lemm} \affiliation{FB Mathematik, Universit\"at Tübingen,
72076 T\"ubingen, Germany} \email{marius.lemm@uni-tuebingen.de}
 
 \author{Israel Michael Sigal} \affiliation{Department of Mathematics, University of Toronto, Toronto, M5S 2E4, Ontario, Canada} \email{im.sigal@utoronto.ca}

 \begin{abstract}
The Lieb-Robinson bound asserts the existence of a maximal propagation speed for the quantum dynamics of lattice spin systems. Such general bounds are not available for most bosonic lattice gases due to their unbounded local interactions. Here we establish for the first time a general ballistic upper bound on macroscopic particle transport in the paradigmatic Bose-Hubbard model. The bound is the first to cover a broad class of initial states with positive density including Mott states, which resolves a longstanding open problem. It applies to Bose-Hubbard type models on any lattice with not too long-ranged hopping. The proof is rigorous and rests on controlling the time evolution of a new kind of adiabatic spacetime localization observable via iterative differential inequalities. 
	\end{abstract}
	
		 \maketitle





A central tenet of relativistic theory is the existence of the light cone, i.e., an absolute upper bound on the speed of propagation. It is a remarkable fact that many non-relativistic condensed-matter systems similarly display an \textit{effective ``light'' cone} which provides a system-dependent upper bound on the maximal speed of quantum propagation. In contrast to its relativistic counterpart, this effective light cone leaks exponentially small errors as is typically unavoidable in quantum dynamics.
This deep fact was discovered by Lieb and Robinson \cite{LR} for quantum spin systems on lattices. The resulting \textit{Lieb-Robinson bound} showed that the ultraviolet cutoff imposed by the lattice provides a maximal speed of propagation on the many-body dynamics. The interest in Lieb-Robinson bounds rapidly surged in the early 2000s when it became clear that they are among the very few effective and general tools that are available for analyzing quantum many-body systems. Accordingly, they have played a decisive role in contexts as diverse as quantum information science \cite{H1,LVV}, condensed-matter theory \cite{BdRF,BMNS,BHM,BHV,H2,NS_ls} and high-energy physics \cite{CL,KS_he,RS} to name a few.

A variety of improvements of the original Lieb-Robinson bound have been achieved over the past 10 years \cite{DLLY1,DLLY2,EMNY,Fossetal,GL,GNRS,HSS,H2,MKN,NRSS,NS1,NSY1,Tranetal} including, e.g., extensions to long-range spin interactions and fermionic lattice gases.
For a more complete discussion, see the survey papers \cite{KGE,NS2,NSY2}.

Despite these celebrated successes, a nagging limitation of the Lieb-Robinson bounds has persisted over the years---the standard proofs are fundamentally limited to \textit{bounded interactions} as enjoyed by quantum spin systems. Certain oscillator systems with unbounded interactions have been addressed by different methods \cite{NRSS}. However, for general unbounded interactions, the standard arguments only yield an unsatisfactory bound on the maximal speed which is proportional to the total particle number $N$, a trivial bound in the thermodynamic limit. 

This limitation largely leaves out the wide field of \textit{bosonic quantum lattice gases} since these naturally come with unbounded interactions, for example the paradigmatic \textit{Bose-Hubbard (BH)  model}  \cite{BDZ}. Experiments with ultracold gases and numerical simulations have found an effective light cone for the BH model after a quench \cite{exp1,exp2,nexp1,nexp2}. On the theoretical side, a fully satisfactory understanding of this fact is lacking. It is known that the problem is subtle because superballistic transport can occur in certain related examples \cite{EG}. 

A small number of theoretical results have established a maximal propagation speed for bosonic lattice gases for special initial states. A first maximal speed bound in the BH model was given in \cite{SHOE} for initial states that have no particles outside of a fixed region. This condition excludes states of positive local density, e.g., Mott states \eqref{eq:mott}. Very recently, a number of groups have made progress on this problem through novel techniques: The $N$-scaling of the velocity was improved to $\sqrt{N}$ \cite{WH}; an almost-linear light cone was derived for special initial states that are local perturbations of a stationary state satisfying certain exponential constraints on the local particle density \cite{KS}; a linear light cone was derived for commutators tested against the state $e^{-\mu N}$ \cite{YL}; and \cite{SHOE} was extended to propagation through vacuum \cite{FLS}. 

In this Letter, we show for the first time the \textit{finiteness of the speed of macroscopic particle transport in the BH model for general initial states}. We obtain an explicit bound \eqref{eq:vmax} on the maximal speed that is independent of the particle number and easily computable from the hopping parameters of the Hamiltonian. 
  In particular, our result is the first to provide a thermodynamically stable ballistic particle propagation bound on the prototypical Mott states \eqref{eq:mott} which resolves a longstanding open problem. See Theorem \ref{thm:main} below for the formal statement. Our result 
is a new kind of \textit{macroscopic-type Lieb-Robinson bound} for particle transport. It remains to be seen if the method can be adapted to propagation of other physical characteristics, 
  e.g., entanglement.

Our main idea is to control the time evolution by means of a new class of observables which we call \textit{adiabatic spacetime localization observables (ASTLO)}. The construction is strongly inspired by the method of propagation observables developed in \cite{APSS,BonyFaupSig,FaupSig,HeSk,HS,SigSof,Skib} and thereby connects these developments to the study of many-body lattice gases for the first time.

The 
specially designed ASTLOs track 
 in a precise way how
 the many-body system dynamically spreads in spacetime,
 while decreasing along quantum evolution. The latter key property allows one to convert the ASTLOs' spacetime localization into suitable estimates on the propagator. This is proven  through iterative differential inequalities obtained by 
Taylor series-like commutator expansion. These techniques are fully analytical, rigorous, and robust. Accordingly, the proof applies to a wide variety of BH type models with rather long-ranged hopping and on general lattices.

\section{Setting and main result} 

We consider a finite subset $\Lam$ of a lattice $\cL\subset \R^d$. For example, $\cL=\Z^d$ and $\Lam$ is a discrete box. We shall prove bounds that are independent of the number of sites in $\Lambda$ and which therefore extend to the infinite-volume limit.

We consider a system of bosons on $\Lam$ described by the generalized Bose-Hubbard model Hamiltonian
\begin{align}\label{H-Lam}H_\Lam=-\sum_{x, y\in\Lam} J_{xy}^\Lam b_x^\dagger b_y+ \sum_{x\in\Lam}V_x(n_x)- \mu\sum_{x\in\Lam}  n_x.\end{align} 
acting on the bosonic Fock space $\mathcal F$. 

We assume that $J_{x,y}^\Lam=J_{y,x}^\Lam$ and we let $V_x:\{0,1,2,\dots\}\to\R$ be \textit{an arbitrary local potential.} 

The standard BH Hamiltonian involves nearest-neighbor hopping and quadratic on-site interaction \cite[eq.\  (65)]{BDZ}, i.e.,
\begin{equation}\label{eq:standardBH}
J_{x,y}^\Lam=J\delta_{\substack{x\sim_\Lam y}},
\qquad V_x(n_x)=V(n_x)=\frac{U}{2} n_x(n_x-1).
\end{equation}
where $x\sim_\Lam y$ means $x$ and $y$ are nearest neighbors in $\Lambda$, possibly subject to periodic boundary conditions if desired.

We allow for long-ranged hopping in the BH Hamiltonian. The hopping range is quantified by an integer parameter $p$ and the quantity
   \begin{align}\label{kaJp-def} 
	 \kappa_J^{(p)}=\max_{x\in\Lam}\sum_{y\in\Lam} |J_{xy}^\Lam| |x-y|^{p}
	  \end{align} 
	  where $|\cdot|$ denotes the Euclidean distance.
Our bounds will involve the constant $\kappa_J^{(p)}$ for some $p\geq 2$ and to have a well-defined infinite-volume limit, we are interested in situations where $\kappa_J^{(p)}$  is bounded \textit{independently of $\Lam$}. For example, if we consider $\Lam\subset \Z^d$ and $|J^\Lam_{xy}| \lesssim | x-y|^{-\alpha}$ for some exponent $\alpha\geq d+1$, then $\kappa_J^{(\alpha-d)}$ is independent of $\Lam$. For finite-range (or exponentially decaying) hopping, we can take $p$ arbitrarily large. 

We will show that the \textit{maximal propagation speed} is given by
\beq\label{eq:vmax}
v_{\max }\equiv  \kappa^{(1)}_J=\max_{x\in\Lam}\sum_{y\in\Lam} |J_{xy}^\Lam| |x-y|.
\eeq
For nearest-neighbor hopping $J_{xy}^{\Lambda}=J\delta_{x\sim y}$, we have $v_{\max}= J\max_{x}\mathrm{deg}(x)$ assuming the lattice embedding is such that nearest neighbors have Euclidean distance $1$.



Our main result controls the macroscopic change of local particle numbers outside of an effective light cone with slope determined by $v_{\max}$. To formulate it precisely, we define for a given subset $S\subset \Lam$, the local particle numbers
\begin{equation} \label{eq:Nlocdefn}
N_S=\sum_{x\in S}n_x,\qquad \bar N_S=\frac{N_S}{N_\Lam}.
\end{equation}

We recall that the total particle number $N_\Lam=\sum_{x\in \Lam} n_x$ is conserved by $H_\Lam$. For $c\in\R $ and $S\subset \Lam$, we write $P_{\bar N_S< c}$, $P_{\bar N_{S^c}\geq c}$, etc., for the associated spectral projectors of $\bar N_S$, where $S^c=\Lambda\setminus S$. 


Given a set $S\subset \Lambda$, we write $R_{\min}(S)$ for the radius of the smallest Euclidean ball $B$ so that $S\subseteq B$.  
We write $\lan A\ran_\psi=\lan \psi, A\psi\ran$ for the expectation value of an observable $A$ in state $\psi$. Given two subsets of the lattice $X,Y\subset \Lam$, we write $d_{XY}$ for their Euclidean distance.

	\noindent\begin{figure}
\begin{center}
    \includegraphics[scale=.5]{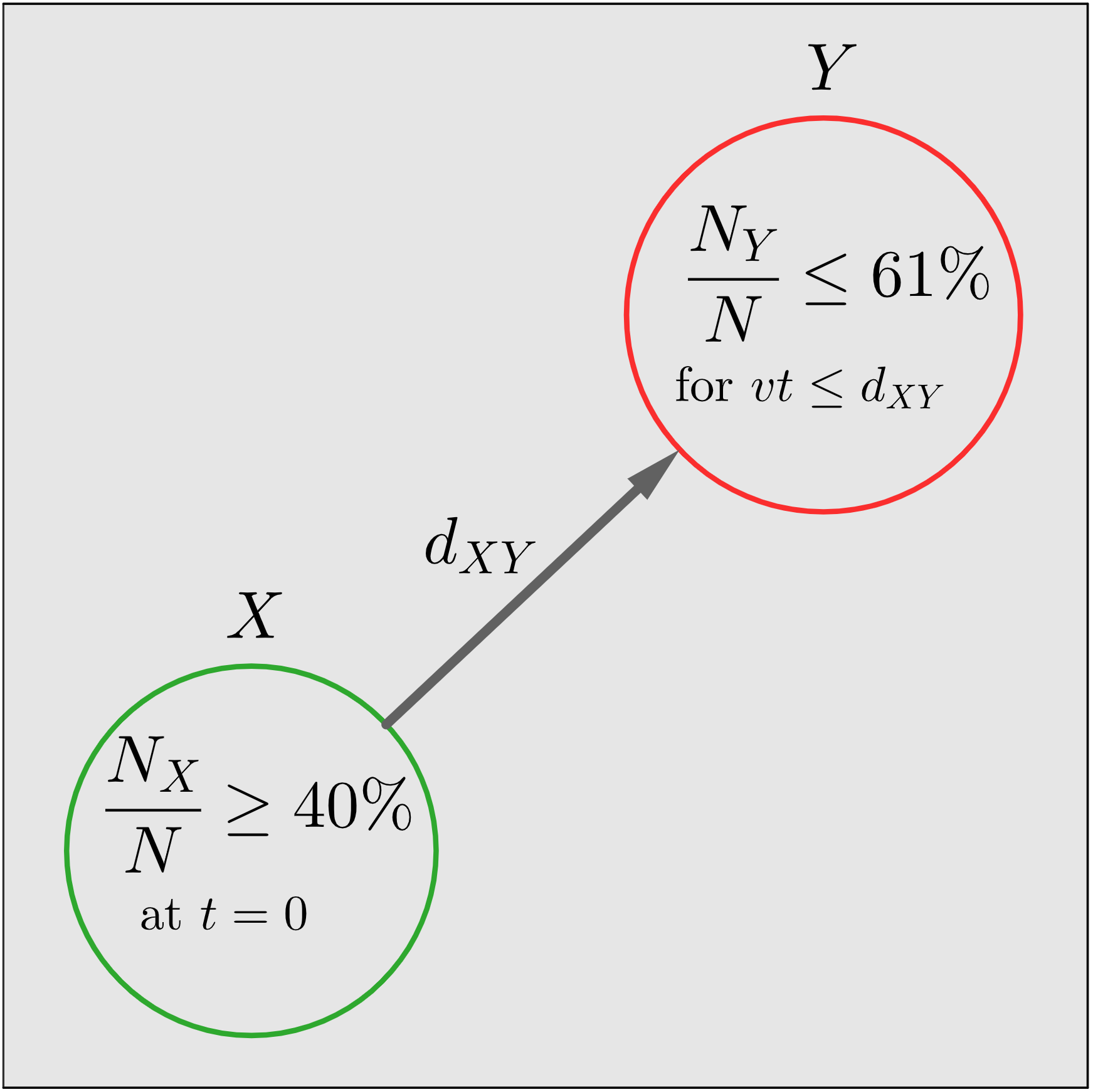}
    \caption{As shown in Theorem \ref{thm:main}, the transport of $1\%$ of the particles from $X$ to $Y$ takes time proportional to $d_{XY}$. A macroscopic cloud of particles moves at most at speed $v_{\max}$.}
    \label{sketch}
\end{center}
\end{figure}

\begin{theorem}[Main result]\label{thm:MaxVE-Hub}\label{thm:main} 	Consider the Hamiltonian $H_\Lam$ given by \eqref{H-Lam} with the hopping matrix $J_{x y}^{\Lambda}$ satisfying 
 $\kappa_J^{(p)}<\infty$ for some $p\geq 2$.  Fix numbers $v>v_{\max }$ and $0\leq \eta<\xi\leq 1$.

Let $X$ and $Y$ be disjoint subsets of $\Lambda$ and let $\phi$ be any normalized state. Consider the time-evolved state
\begin{equation} \label{eq:psitdefn}
\psi_t=	e^{-itH} P_{ \bar N_{X^c}\leq \eta}\phi.
 \end{equation}
Then we have the decay estimate 
\begin{align} \label{max-vel-estHub'}
	\lan P_{ \bar N_{Y}\ge \xi}\ran_{\psi_t}	\le C_{\kappa_J^{(p)}} \,d_{XY}^{1-p},		\end{align}
whenever $d_{XY}\geq v t+2R_{\min}(X)$.
\end{theorem}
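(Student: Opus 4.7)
The plan is to adapt the ASTLO framework of \cite{APSS,SigSof,Skib} to the bosonic setting: we build a smooth, time-dependent observable $\Phi_s$ that interpolates between zero on $\psi_0$ at $s=0$ and a smoothed upper bound for $P_{\bar N_Y\geq\xi}$ at $s=t$, and then control $\langle\Phi_s\rangle_{\psi_s}$ through iterated differential inequalities. Because all $\bar N_S$ commute as functions of $\{n_x\}_{x\in\Lam}$, inequalities among them obey the ordinary rules of functional calculus, and we will use this repeatedly without further comment.

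\textbf{ASTLO construction.} Fix a smooth nondecreasing $\chi:\R\to[0,1]$ with $\chi\equiv 0$ on $(-\infty,0]$ and $\chi\equiv 1$ on $[1,\infty)$. Pick $x_0\in X$ with $X\subseteq B_R(x_0)$ for $R=R_{\min}(X)$, choose a smoothing scale $\delta\in(0,\tfrac12(d_{XY}-vt-2R)]$, and set
\[
r(s)=d_{XY}-v(t-s)-\delta,\qquad f_s(x)=\chi\!\left(\frac{|x-x_0|-r(s)}{\delta}\right).
\]
Define the ASTLO $A_s=N_\Lam^{-1}\sum_{x\in\Lam}f_s(x)\,n_x$ and its smoothed trigger $\Phi_s=\chi((A_s-\eta)/(\xi-\eta))$. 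At $s=0$, the choice $r(0)\geq R$ forces $f_0\equiv 0$ on $X$, hence $A_0\leq\bar N_{X^c}$ and $\Phi_0\,P_{\bar N_{X^c}\leq\eta}=0$. At $s=t$, $f_t\geq\one_Y$ (since every $y\in Y$ satisfies $|y-x_0|\geq d_{XY}=r(t)+\delta$), so $A_t\geq\bar N_Y$ and $\Phi_t\geq P_{\bar N_Y\geq\xi}$. The fundamental theorem of calculus reduces the claim to
\[
\langle P_{\bar N_Y\geq\xi}\rangle_{\psi_t}\leq\int_0^t\langle D\Phi_s\rangle_{\psi_s}\,ds,\quad D\Phi_s:=\partial_s\Phi_s+i[H_\Lam,\Phi_s].
\]

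\textbf{Estimating the Heisenberg derivative.} Since $r'(s)>0$ and $\chi'\geq 0$, one has $\partial_s A_s\leq 0$. The on-site and chemical-potential terms commute with $A_s$, so only the hopping contributes to the commutator:
\[
i[H_\Lam,A_s]=\frac{1}{N_\Lam}\sum_{x,y\in\Lam}(f_s(x)-f_s(y))J_{xy}^\Lam\,b_x^\dagger b_y.
\]
The decisive step is a Taylor expansion of $f_s(x)-f_s(y)$ to order $p-1$ together with antisymmetrization via $J_{xy}^\Lam=J_{yx}^\Lam$ and the elementary bosonic bound $\pm(b_x^\dagger b_y+b_y^\dagger b_x)\leq n_x+n_y+1$. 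The first-order term produces an operator $-v_{\max}B_s$ with $B_s\geq 0$ sharing the spacetime front of $-\partial_s A_s$; since $v>v_{\max}$ this is absorbed into the nonpositive $\partial_s A_s$. The $p$-th Taylor remainder is dominated by $\kappa_J^{(p)}\delta^{-p}$ times a number-operator-bounded expression, and propagating these bounds through $\Phi_s=\chi((A_s-\eta)/(\xi-\eta))$ via the chain rule (using $\chi'\geq 0$) yields $\langle D\Phi_s\rangle_{\psi_s}\leq C_{\kappa_J^{(p)}}\delta^{-p}$. Integrating over $s\in[0,t]$, using $t\leq d_{XY}/v$, and choosing $\delta\asymp d_{XY}$ (permissible whenever $d_{XY}\geq vt+2R$) produces the announced bound $C_{\kappa_J^{(p)}}\,d_{XY}^{1-p}$.

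\textbf{Main obstacle.} The crux is executing the Taylor-commutator expansion in the presence of unbounded bosonic creation and annihilation operators: each order of expansion produces polynomials in $b^\dagger,b$ that must be iteratively dominated by smooth functions of the local number operators while preserving the ASTLO structure and its nonpositivity of front. This careful bosonic operator bookkeeping—replacing the standard Lieb--Robinson machinery available only for bounded spin interactions—is where the hypothesis $\kappa_J^{(p)}<\infty$ enters the estimate cleanly.
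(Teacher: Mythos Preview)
Your ASTLO setup and the boundary identities $\Phi_0 P_{\bar N_{X^c}\leq\eta}=0$ and $\Phi_t\geq P_{\bar N_Y\geq\xi}$ are exactly the right framework and match the paper. The gap is in the claimed estimate $\langle D\Phi_s\rangle_{\psi_s}\leq C_{\kappa_J^{(p)}}\delta^{-p}$, which as written does not follow, for two linked reasons.

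First, the step ``propagating these bounds through $\Phi_s=\chi((A_s-\eta)/(\xi-\eta))$ via the chain rule'' is where the real work hides. Since $H$ and $A_s$ do not commute, $i[H,\Phi_s]$ is \emph{not} $\chi'(\cdot)\,i[H,A_s]/(\xi-\eta)$; an operator commutator expansion (the paper uses Helffer--Sj\"ostrand, Lemma~\ref{lem:commut-exp}) generates all the higher commutators $B_k=\mathrm{ad}^k_{A_s}(iH)$, each of which must be estimated. Second, even at the level of $i[H,A_s]$, your Taylor expansion ``to order $p-1$'' produces, besides the first-order term and the order-$p$ remainder, intermediate terms of size $\delta^{-k}$ for $k=2,\dots,p-1$, supported on the spacetime front; for large $\delta$ these are much \emph{larger} than $\delta^{-p}$ and you have not said what happens to them. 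They cannot be absorbed into the negative leading term $-(v-v_{\max})\delta^{-1}B_s$ because they involve $\chi^{(k)}$ rather than $\chi'$, and in general $|\chi^{(k)}|\not\leq C\chi'$.

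The paper's resolution is the \emph{iterative} differential inequality of Theorem~\ref{thm:timederivmaintxt}: the combined subleading contribution is shown to be of order $s^{-2}$ times an object $\tilde f'(\tilde{\mathbb A}_{t})\tilde{\mathbb A}'_{t}$ of the \emph{same} ASTLO form (with modified cutoffs $\tilde f,\tilde\chi$). Because the inequality holds for every choice of cutoffs in the admissible classes, one applies it to $\tilde f,\tilde\chi$ and iterates until the front term carries a factor $s^{1-p}$; only then does the time integration give $d_{XY}^{1-p}$. Without this recursion your argument yields at best $d_{XY}^{-1}$ for every $p\geq 2$. A smaller point: your constraint $\delta\leq\tfrac12(d_{XY}-vt-2R)$ conflicts with the later choice $\delta\asymp d_{XY}$ near the light cone; the paper avoids this by running the ASTLO at a slightly reduced speed $v'=(1-\epsilon)v$, so that the smoothing scale can always be taken as $\epsilon d_{XY}$.
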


To interpret the result, see Figure \ref{sketch} and consider an initial state $\phi$ so that $P_{\bar N_{X^c} \le \eta} \phi=\phi$, meaning the fraction of particles outside of a ball $X$ is at most $\eta$ (say, $\eta=0.6$ and so at least $60 \%$ of all particles are outside of $X$). Then \eqref{max-vel-estHub'} shows that the time it takes to raise the fraction of particles inside $Y$ to $\xi>\eta$ (say, to $61 \%$ of all particles) is at least proportional to the distance $d_{XY}$. In short, moving $(\xi-\eta)N$ particles from $X$ to $Y$ takes time proportional to $d_{XY}$. This proves that macroscopic many-body transport is at most ballistic. 

A few remarks on Theorem \ref{thm:MaxVE-Hub} are in order. (i) The notation $C_{\kappa_J^{(p)}}$ means that the constant depends on the value of $\kappa_J^{(p)}$. (ii) The left-hand side of \eqref{max-vel-estHub'} vanishes at $t=0$. We prove that it remains small as long as one stays outside of an effective light cone \beq\label{eq:lightconecondn}d_{XY}\geq v t+2R_{\min}(X) \eeq (see \eqref{max-vel-estHub'}).  For finite-range hopping, the decay outside of the effective light cone is faster than any polynomial. (iii) The maximal speed $v_{\max }$ from \eqref{eq:vmax} is independent of particle number and of the observables $X$ and $Y$. It only depends on model parameters similarly to the Lieb-Robinson velocity.
(iv) The result applies to a broad class of initial states including ones that can have positive local particle density. This allows, for the first time, to consider the important class of \textit{Mott states}
\begin{equation}\label{eq:mott}
\phi=\bigotimes_{x\in \Lambda} (a_x^\dagger)^{\nu_x}|0\ran,\qquad \nu_x\in \{0,1,2,\ldots\}.
\end{equation}
(A common choice is $\nu_x\equiv \nu$ with $\nu-1<\frac{\mu}{U}<\nu$ which gives a Mott insulating ground state of \eqref{eq:standardBH} in the limit $U\gg J$.)
(v) The term $2R_{\min}(X)$ in the condition following \eqref{eq:lightconecondn} plays no role when $X$ is a fixed bounded set. Moreover, if $d_{0Y}=d_{XY}+R_{\min }(X)$ (e.g., if $X$ has symmetry) then \eqref{eq:lightconecondn} can be relaxed to $d_{XY}\geq vt$ even if $X$ grows with system size. Finally, the constant $2$ can be replaced by any number $>1$.



\section{ASTLOs: Definition and basic properties}
The overarching idea behind our approach is to construct special adiabatic spacetime localization observables (ASTLO) (see \eqref{eq:Ldefn}  below) which decrease monotonically along quantum trajectories (up to inessential fast decaying terms). The proof is based on iterative differential inequalities with the adiabatic nature of ASTLO's playing an important role.

An important feature of the ASTLO construction is that we use smooth, slowly varying (adiabatic) cutoff functions instead of sharp ones. 

Given  $v>v_{\max}$, let $\eps\in (0,\tfrac{1}{2})$ be small enough such that $v'=(1-\eps)v>v_{\max}$ still. We define the smeared out light cone indicator as 
\beq\label{eq:chitdefn}
\chi_{t}(|x|)=\chi\left(\frac{|x|-R_{\min}(X)-v't}{\eps d_{XY}}\right),
\eeq
where $\chi$ is a smoothed out indicator function of the semi-interval $[0, \infty)$; see Figure \ref{fig1} in the supplemental material (SM). (A precise definition will be given below.) By translation, we may assume that $X\subset \Lam$ is contained in $B_{R_{\min}(X)}$, the Euclidean ball of radius $R_{\min}(X)$ centered at $0$. 

We consider $ d_{XY}$ as the large adiabatic parameter that makes $\chi_{t}(x)$ slowly varying. 
The associated adiabatic spacetime localization operator (ASTLO) is then the Fock space operator $\astlo_t$ given by the (normalized) second quantization of $\chi_t$, i.e., 
\begin{equation}\label{eq:Ldefn}
	\astlo_{t}=\frac{1}{N_\Lambda}\sum_{x\in \Lambda} \chi_{t}(|x|) n_x.
\end{equation}
Physically, the ASTLO $\astlo_{t}$ can be thought of as a smeared-out localized relative number operator. It measures how many particles are at least distance $v't$ away from the ball $B_{R_{\min}(X)}$, but it only fully counts the particles whose distance from the light cone is at least of order $\eps d_{XY}$. Conversely, the particles whose distance from the light cone is positive but $\ll \eps d_{XY}$ contribute almost nothing to $\astlo_t$.

The ASTLOs are useful because, in addition to decreasing monotonically along quantum trajectories, they satisfy the following two somewhat competing properties:
(I) They are closely connected to the more sharply varying local particle numbers $N_{X^c}$ and $N_Y$. (II) Their adiabatic nature leads to a slow time evolution. Mathematically, this means that higher commutators are subleading in the small adiabatic parameter $1/d_{XY}$ which enables the iterative commutator expansion.

Let us explain point (I) further. We begin by noting that local particle number operators and ASTLOs are sums of $n_x$'s and thus commute. Then $x\in X\subset B_{R_{\min}(X)}$ implies $\chi_{0}(|x|)=0$ and so we have the operator inequality
\begin{equation}\label{eq:NLt0}
\bar N_{X^c}\geq \astlo_{0}.
\end{equation}

Since $X$ contains the origin, we have for any $y\in Y$ that $|y|\geq  d_{XY}$. The assumption $d_{XY}\geq vt+2R_{\min}(X)$ and our choice of $\eps$ then imply that
$\chi_{t}(|y|)=1$. Hence, we obtain the second operator inequality
\begin{equation}\label{eq:NL}
\bar N_{Y}\leq \astlo_{t}
\end{equation}
which clarifies point (I) above.

\section{Sketch of proof of Theorem \ref{thm:main}}

In view of point $\mathrm{(II)}$, one might hope to use the fact that the ASTLOs decrease monotonically along quantum trajectories together with relations \eqref{eq:NLt0} and \eqref{eq:NL}  to estimate the quantities \eqref{eq:psitdefn} and \eqref{max-vel-estHub'} appearing in the main result. While this can be done for special initial conditions similarly to 
 \cite{FLS}, this approach does not work in the generality we desire here. 

To treat positive densities, we introduce an augmented ASTLO by taking a monotonic function of $\astlo_t$. Let $f$ be a monotonic smooth cutoff function that goes from $0$ to $1$ between $\eta$ and $\xi$. To be precise, $f$ belongs to the class of cutoff functions $\mathcal C_{\eta,\xi}$ (the formal definition below can be skipped on first reading)
$$
\begin{aligned}
\mathcal C_{\eta,\xi}
=
&\Big\{ 		
f\in C^\infty(\R_+)\,:\, f,f'\geq0,\, \sqrt{f'}\in C^\infty(\R_+),\\
& f=0 \textnormal{ on } (0,\eta),\, f=1 \textnormal{ on } (\xi,\infty), \mathrm{supp}\, f'\subset (\eta,\xi)
\Big\}.
\end{aligned}
$$
Now we define the approximate spectral projector for the ASTLO via the spectral theorem as
$$
\Phi(t)=f(\astlo_t)=\sum_{\lam\in\mathrm{spec} \astlo_t} f(\lambda) P_{\lambda}(\astlo_t).
$$ 
with $P_\lam(\astlo_t)$ the projector onto the $\lambda$-eigenspace of $\astlo_t$. 

 
The fact that $f\in \mathcal C_{\eta,\xi}$ implies that $\Phi(t)$ is an approximate spectral projector in the sense that
\beq\label{eq:aboverelations}
P_{\bar N_{X^c}\leq \eta}\Phi(0)=0,
 \qquad P_{\bar N_{Y}\geq \xi}=P_{\bar N_{Y}\geq \xi}\Phi(t).
\eeq
We denote $\lan A\ran_t=\lan A\ran_{\psi_t}$. The above relations \eqref{eq:aboverelations} give
\begin{equation}\label{eq:Phiconditions}
\lan \Phi(0)\ran_0=0,\qquad \lan P_{ \bar N_{Y}\ge \xi}\ran_{t}\leq \lan \Phi(t)\ran_{t}.
\end{equation}
As anticipated, we see that the task reduces to controlling the dynamical growth of the function $t\mapsto \lan \Phi(t)\ran_{t}$ governed by the differential equation
\begin{align}
	\label{dt-Heis}
	{\dd \over{\dd t}}\left<\Phi(t)\right>_t =&\lan D\Phi(t)\ran_t,\\ 
\label{Heis-der}\textnormal{where } D \Phi(t)=&\frac{\partial}{\partial t}\Phi(t)+i[H, \Phi(t)].
\end{align}
$D\Phi(t)$ is called the Heisenberg derivative of $\Phi(t)$. 


\begin{theorem}[Bound on the Heisenberg derivative]\label{thm:timederivmaintxt}
Let $f\in \mathcal C_{\eta,\xi}$ and $\chi\in \mathcal C_{1/2,1}$. Then, there exists a constant $C>0$ and cutoff functions $\tilde f\in \mathcal C_{\eta,\xi}$ and $\tilde \chi\in\mathcal C_{1/2,1}$ such that for all $t$ and all sufficiently large $s$,
	\beq\label{eq:DPhiest}
	\begin{aligned}
	D\Phi(t)
	\leq &-\frac{v'-v_{\max}}{s} f'(\astlo_{t})\astlo'_{t}+\frac{C}{s^{2}} \tilde f'(\tilde \astlo_{t})\tilde \astlo_{t}'+\frac{C}{s^{p}}.
\end{aligned}
	\eeq
		\end{theorem}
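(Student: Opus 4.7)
The plan is to split $D\Phi(t)=\partial_t\Phi(t)+i[H_\Lambda,\Phi(t)]$ and Taylor-expand each piece in the small adiabatic parameter $1/s$ (with $s=\varepsilon d_{XY}$), so that the leading $O(1/s)$ contributions combine into the claimed dominant negative term while the rest collapses into the stated $O(1/s^2)$ and $O(1/s^p)$ corrections. The time derivative is exact because $\{\astlo_t\}_t$ is a commuting family (all $\astlo_t$ are functions of the mutually commuting $\{n_x\}$), so functional calculus yields $\partial_t f(\astlo_t)=f'(\astlo_t)\partial_t\astlo_t$ with no Duhamel remainder. Differentiating $\chi_t(|x|)$ in $t$ produces $\partial_t\astlo_t=-\tfrac{v'}{s}\astlo'_t$, where $\astlo'_t:=N_\Lambda^{-1}\sum_{x}(\chi')_t(|x|)\,n_x$, hence $\partial_t\Phi(t)=-\tfrac{v'}{s}f'(\astlo_t)\astlo'_t$.

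For the commutator, $V_x(n_x)$ and $\mu N_\Lambda$ commute with $\astlo_t$, so only the hopping contributes. The pivotal algebraic identity is that $b_x^\dagger b_y$ commutes with $N_\Lambda$ and shifts each $n_z$ by $\delta_{zx}-\delta_{zy}$, so $[\astlo_t,b_x^\dagger b_y]=\alpha_{xy}(t)\,b_x^\dagger b_y$ with $\alpha_{xy}(t):=(\chi_t(|x|)-\chi_t(|y|))/N_\Lambda$ acting as a scalar on each particle sector. Hence $f(\astlo_t)b_x^\dagger b_y=b_x^\dagger b_y\,f(\astlo_t+\alpha_{xy}(t))$, and Taylor expansion to order $p$ yields
\[[b_x^\dagger b_y,f(\astlo_t)]=-b_x^\dagger b_y\Bigl[\sum_{k=1}^{p-1}\frac{\alpha_{xy}^{k}}{k!}f^{(k)}(\astlo_t)+R^{(p)}_{xy}\Bigr].\]
Since $|\alpha_{xy}(t)|\le\|\chi'\|_\infty|x-y|/(sN_\Lambda)$ and $\sum_{x,y}|J^\Lambda_{xy}||x-y|^p\le\kappa_J^{(p)}$, the remainder contributes at most $C/s^p$ once summed against $-i\sum_{xy}J^\Lambda_{xy}$.

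The $k=1$ piece produces the main term. Using $\chi_t(|x|)-\chi_t(|y|)=s^{-1}\int_{|y|}^{|x|}(\chi')_t(r)\,dr$, one isolates the leading contribution $s^{-1}(|x|-|y|)(\chi')_t(|x|)$ (with an error of order $(|x-y|/s)^{2}$), symmetrizes via $J^\Lambda_{xy}=J^\Lambda_{yx}$ to form the anti-Hermitian combination $i(b_x^\dagger b_y-b_y^\dagger b_x)$, and applies a weighted bosonic Cauchy--Schwarz inequality $\pm i(b_x^\dagger b_y-b_y^\dagger b_x)\le w_x n_x+w_y n_y$ (with $w_xw_y=1$ and weights built from $\sqrt{(\chi')_t}$) to dominate the leading piece by $\tfrac{v_{\max}}{s}f'(\astlo_t)\astlo'_t$ plus an $O(1/s^2)$ correction. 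Combined with $\partial_t\Phi$, this yields the main term $-\tfrac{v'-v_{\max}}{s}f'(\astlo_t)\astlo'_t$. The $k=2$ piece, containing $f''(\astlo_t)\alpha_{xy}^{2}$ of size $1/s^2$, is handled by the standard ``absorption'' trick: $f''$ and $((\chi')_t)^{2}$ are dominated pointwise by the derivatives of slightly wider cutoffs $\tilde f\in\mathcal C_{\eta,\xi}$ and $\tilde\chi\in\mathcal C_{1/2,1}$, producing the quoted $(C/s^2)\,\tilde f'(\tilde\astlo_t)\,\tilde\astlo'_t$. Terms with $k\ge 3$ and the Taylor remainder are swept into $C/s^p$ via $\kappa_J^{(p)}$.

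The main obstacle is the operator Cauchy--Schwarz step, because the hopping commutator involves unbounded bosonic operators $b_x^\dagger b_y$, and one must extract the sharp constant $v_{\max}=\kappa_J^{(1)}$ while avoiding spurious $N_\Lambda$-dependence. The $1/N_\Lambda$ normalization built into $\astlo_t$ is crucial: it absorbs the $N_\Lambda$-growth of $b_x^\dagger b_y$ on each particle sector, so that the $k$-th order term scales cleanly as $1/s^k$ (indeed, with an additional harmless $1/N_\Lambda^{k-1}$ suppression that is dropped in the pessimistic bound). Extra care is needed when implementing the weighted Cauchy--Schwarz near the boundary of the support of $(\chi')_t$, typically via a vanishing regularization $\sqrt{(\chi')_t+\delta}$ with $\delta\downarrow 0$, and in ensuring that the auxiliary cutoffs $\tilde f,\tilde\chi$ can be chosen in $\mathcal C_{\eta,\xi}$ and $\mathcal C_{1/2,1}$ so that they genuinely dominate both $f''$ and the squared slope of $\chi$ while still respecting the support constraints used in the spectral-projector relations of the main theorem.
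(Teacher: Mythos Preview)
Your expansion via the shift identity $f(\astlo_t)b_x^\dagger b_y=b_x^\dagger b_y\,f(\astlo_t+\alpha_{xy})$ is a genuinely different and more elementary route than the paper's: the paper obtains the same expansion $i[H,f(\astlo_t)]=\sum_{k\ge1}\tfrac{1}{k!}B_k f^{(k)}(\astlo_t)+\mathrm{Rem}_p$ through the Helffer--Sj\"ostrand functional calculus (Lemma~\ref{lem:commut-exp}), whereas your shift identity yields it by a one-line scalar Taylor expansion on each particle-number sector. This is a real simplification.

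However, the decisive step you skip is the \emph{second} symmetrization, in $f'(\astlo_t)$. After your expansion the leading term reads $B_1 f'(\astlo_t)$ with $f'(\astlo_t)$ on one side only; since $f'(\astlo_t)$ does not commute with $b_x^\dagger b_y$, the operator inequality $\pm B_1\le\tfrac{v_{\max}}{s}\astlo'_t$ that your weighted Cauchy--Schwarz would yield does \emph{not} imply $\tfrac12(B_1 f'+f'B_1)\le \tfrac{v_{\max}}{s}f'(\astlo_t)\astlo'_t$: an inequality $A\le B$ cannot in general be anticommuted with a non-commuting $C\ge0$. The $x\leftrightarrow y$ symmetrization you mention only makes $B_1$ manifestly Hermitian; it does nothing about the one-sided $f'$. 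The paper resolves this by writing $f^{(k)}=g_k u_k^2$ with $u_1=\sqrt{f'}$ and commuting one factor $u_k(\astlo_t)$ through $B_k$ to produce the genuinely symmetric form $u_1 B_1 u_1$ (Lemma~\ref{lm:HPhi-comm}); this commutation generates additional terms of order $B_{k+j}$, $j\ge1$, which populate the double sum in $S$ and must all be estimated. Your sketch has to include this step---or an equivalent: re-apply your own shift identity to move $\sqrt{f'}(\astlo_t)$ past $b_x^\dagger b_y$, Taylor-expand again, and track the new $O(s^{-2})$ corrections.

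A second inaccuracy: the terms with $k\ge3$ are of size $s^{-k}$, not $s^{-p}$; for $p>3$ they cannot be swept into the scalar $C/s^p$ remainder. In the paper they are placed into the localized $\tfrac{C}{s^{2}}\tilde f'(\tilde\astlo_t)\tilde\astlo'_t$ term via the bound $\pm B_k\le cs^{-k}\tilde\astlo'_{t}+Cs^{-p}$ (Lemma~\ref{lm:IIest}), which itself requires a symmetric $\sqrt{\tilde\chi'_t}$-sandwiching at the one-particle level (Lemma~\ref{lm:step1localize}) before the Schur test can be applied. Your ``absorption trick'' for $k=2$ needs the same localization in $\tilde\chi'$, not just the pointwise domination of $f''$ by $\tilde f'$.
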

		
  $\astlo'_t, \tilde \astlo_t$ and $\tilde \astlo'_t$ are defined in the natural way: namely, by replacing $\chi_t$ by respectively $ \chi_t', \tilde \chi_t$  and $\tilde \chi'_t$ in \eqref{eq:Ldefn}, while replacing $\eps d_{XY}$ by $s$, where $\chi_t'$ is given by 
		\begin{align}\label{chits'}	\chi'_{t}(|x|)=\chi'\left(\frac{|x|-R_{\min}(X)-v't}{s}\right).
	\end{align}

The proof of Theorem \ref{thm:timederivmaintxt} is lengthy and deferred to the supplemental material (SM). A key ingredient in the proof is the bound 
\begin{equation}\label{k}
\| [J, \x]\|\le \ka^{(1)}_J \equiv v_{\max}
\end{equation}
(uniformly in $\Lam$) where $J f(x)= \sum_{y} J_{xy} f_y$  is an operator on the one-particle space $\ell^2(\Lambda)$. The bound \eqref{k} follows from Lemma \ref{lem:HRf-com} in the SM and the Schur test;
it is where formula  \eqref{eq:vmax} for $v_{\max}$
 arises in our argument.


\begin{proof}[Proof of Theorem \ref{thm:main}] 

The key idea is to iterate \eqref{eq:DPhiest}. We fix $f\in\mathcal C_{\eta,\xi}$ and $\chi \in\mathcal C_{1/2,1}$. We use $s=\eps d_{XY}$, take the expectation of \eqref{eq:DPhiest} and integrate over time. Using that $\lan \Phi(t)\ran_t\geq 0$ and,  by \eqref{eq:Phiconditions}, $\lan \Phi(0)\ran_0= 0$, as well as $v'-v_{\max}=\eps v>0$ and $t\leq \tfrac{s}{\eps v}$, we obtain
$$
\int_0^t \lan f'(\astlo_r) \astlo'_r\ran_r \dd r\leq Cs^{-1} \int_0^t  \lan \tilde f'(\tilde \astlo_r) \tilde \astlo'_r\ran_r \dd r+Ct s^{1-p}.
$$
Since this holds for any $f\in\mathcal C_{\eta,\xi}$, we can iterate. It follows that there exist $\tilde f\in \mathcal C_{\eta,\xi}$ and $\tilde\chi\in\mathcal C_{1/2,1}$ so that
\begin{align}
\label{ineq-iter}\int_0^t \lan f'(\astlo_r) \astlo'_r\ran_r  \dd r\leq& Cs^{1-p} \int_0^t  \lan\tilde f'(\tilde \astlo_r) \tilde \astlo'_r\ran_r \dd r+Ct s^{1-p}\notag\\
\leq& Ct s^{1-p}
\end{align}
where the second estimate uses that $\|\tilde f'(\tilde \astlo_r)\|\leq \|\tilde f'\|_\infty\leq C$ by the functional calculus and that $\lan\tilde \astlo_r'\ran_r\leq C$ which in turn follows from the Cauchy-Schwarz inequality $\lan b_x^\dagger b_y\ran_r\leq \lan n_x\ran_r+\lan n_y\ran_r$.

 Integrating the expectation of \eqref{eq:DPhiest} over time and using $\lan \Phi(t)\ran_t= \lan \Phi(r)\ran_r +\int_r^t \lan D\Phi(r)\ran_r \dd r$ and \eqref{ineq-iter},  we obtain, for any $t\ge r\ge 0$,
\beq \label{Phi-monot}  
	\lan \Phi(t)\ran_t	\leq \lan \Phi(r)\ran_r +C(t-r) s^{-p}, 
\eeq 
showing the essential monotonicity of $\lan \Phi(t)\ran_t$ under the evolution. Setting here $r=0$ and using  \eqref{eq:Phiconditions} gives the desired bound $\lan P_{ \bar N_{Y}\ge \xi}\ran_{t}\leq  Ct s^{-p}$.
 \end{proof}
  
\section{Conclusions}
We have resolved a longstanding open problem in the area of quantum lattice gases by providing the first derivation of a maximal speed for macroscopic particle transport in the Bose-Hubbard model. Our result is a new kind of macroscopic-type Lieb-Robinson bound for particle transport. It complements other recent results \cite{WH,KS,YL,FLS} which hold for special initial states and are otherwise closer to the original formulation of the Lieb-Robinson bound.  

The central physical idea underpinning our proof is to engineer the ASTLOs, adiabatic and  spacetime observables whose support dynamically tracks and controls the surplus of particles outside the effective light cone and whose expectation values decrease under time evolution. 

 The analytical method that we use is quite robust. For example, it applies without significant change to a wide variety of BH type models with different hoppings and different lattice structures. 

Regarding possible extensions, we note that our ASTLOs here are specifically designed to track particle transport and thereby naturally give rise to the commutator $[J,x]$. To control propagation of other 
physical quantities, e.g.\ entanglement, one would use adapted observables which have to satisfy the appropriate analog of \eqref{k} uniformly in $\Lambda$. This change would also affect the value of the maximal speed bound (but not its existence).

\section*{Acknowledgments}
The authors thank Tomotaka Kuwahara for useful comments on a draft version of the manuscript. They also thank Jens Eisert for informing them of related work currently under completion \cite{Eisert}. The research of IMS is supported in part by NSERC Grant No. NA7901.

 \widetext
 \pagebreak
 \begin{center}
 	\textbf{\large Supplemental Material:\\ Maximal speed for macroscopic particle transport in the Bose-Hubbard model}
 \end{center}

\stepcounter{myequation}
 \setcounter{figure}{0}
 \setcounter{table}{0}
 \makeatletter
 \renewcommand{\theequation}{S\arabic{equation}}
 \renewcommand{\thefigure}{S\arabic{figure}}
This appendix provides the complete proof of Theorem \ref{thm:timederivmaintxt}. 
In the following, $c,C>0$ stand for generic positive constants whose value may change from line to line and which may implicitly depend on parameters such as $\|\chi'\|_\infty$ or on $\kappa_J^{(1)},\ldots,\kappa_J^{(p)}$ defined in \eqref{kaJp-def}.  
Note that all $\kappa_J^{(q)}<\infty,  q=1,\ldots,p,$ by an assumption of Theorem \ref{thm:timederivmaintxt}. 

	\noindent\begin{figure}[h!]
\begin{center}
    \includegraphics[scale=.65]{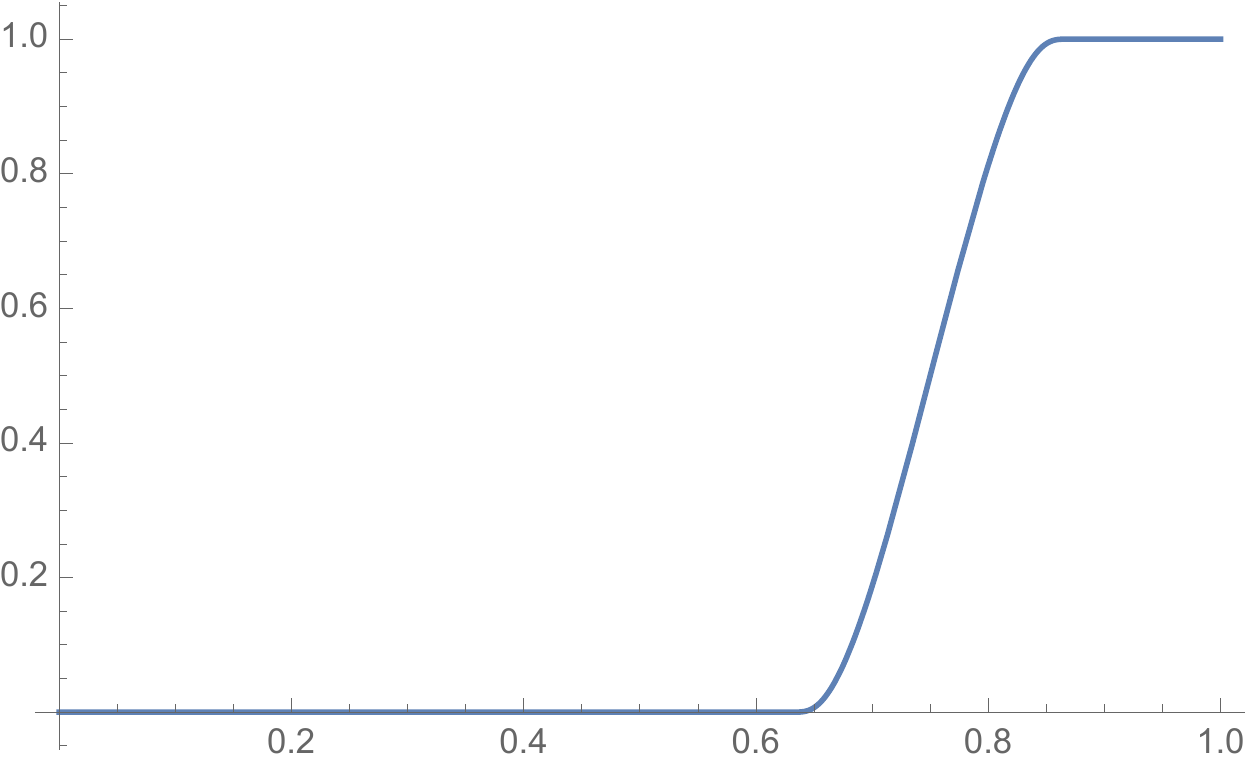}
    \caption{Example of a cutoff function $\chi\in \mathcal C_{1/2,1}$.}
    \label{fig1}
\end{center}
\end{figure}

Recall the definition of the set of cutoff functions,
\beq\label{Cetaxi}
\begin{aligned}
\mathcal C_{\eta,\xi}
=
\Big\{ 		
f\in C^\infty(\R_+)\,:\, f,f'\geq0,\, &\sqrt{f'}\in C^\infty(\R_+),
 f=0 \textnormal{ on } (0,\eta), f=1 \textnormal{ on } (\xi,\infty), \mathrm{supp}\, f'\subset (\eta,\xi)
\Big\}.
\end{aligned}
\eeq

An example of a cutoff function lying in $\mathcal C_{1/2,1}$ is shown in Figure \ref{fig1}. For $ \chi\in \mathcal C_{1/2,1}$, we write $\chi_{t,s}$ for \eqref{eq:chitdefn} with the variable $s$ replacing $\eps d_{XY}$, that is,
\beq\label{eq:chitsdefn}
\chi_{t,s}(|x|)=\chi\left(\frac{|x|-R_{\min}(X)-v't}{s}\right),
\eeq
and we define $\tilde \chi_{t,s}$ and $\chi'_{t,s}$ analogously; see eq.\ \eqref{chits'}. We also consider the generalized ASTLO
\beq\label{eq:astlotsdefn}
\astlo_{t,s}=\frac{1}{N} \dd\Gam(\chi_{t,s})
\eeq
Fix $f\in \mathcal C_{\eta,\xi}$. We shall consider the time evolution of the observable 
\begin{align}\label{propag-obs1}
	\Phi_s(t) & = f(\astlo_{t,s}).
\end{align}
The operators $\astlo_{t,s}', \tilde \astlo_{t,s}$ and $\tilde \astlo_{t,s}'$ are defined analogously as explained after Theorem \ref{thm:timederivmaintxt}.	
%

 In the remainder of this section, we prove Theorem \ref{thm:timederivmaintxt} through various expansions in the small parameter $s^{-1}$.

\section{Toolbox and definitions} 
In this section, we prepare the proof of Theorem \ref{thm:timederivmaintxt} by recalling some mathematical tools used in the rigorous Schr\"odinger equation theory. 


\subsection{
Commutator expansions with error estimates}\label{sec:commut} 
We review relevant commutator expansions with error estimates. These results were first derived in \cite{SigSof} and then improved in \cite{Skib, HunSig1, HunSigSof}. We denote $\mathrm{ad}_{A} H=[A,H]$ and write $\mathrm{ad}_{A}^k$ for the $k$-fold iteration of this map.

We introduce the weighted norms $\|f \|_{m}=\int_\R (1+x^2)^{m/2} |f(x)|\dd x$.  
\begin{lemma}\label{lem:commut-exp} %
	Let $f\in C^\infty(\R)$ be bounded, 
	with $\sum_{k=0}^{M+2}\|f^{(k)}\|_{k-M-1}<\infty$, for some $M\ge 1$.
	Let  $A$ be a bounded self-adjoint operator and let $B$ be a bounded operator. 
	Then
	\begin{align}\label{Hcomm-exp} &[B, f(A)]=  \sum_{k=1}^{M-1}\frac{1}{k!}\mathrm{ad}_{A}^k(B) f^{(k)}(A) +\mathrm{Rem}_M, \\
		&\textnormal{where } 		\label{Res-exp}\mathrm{Rem}_M(A,f)=\int_{\R^2} (z-A)^{-1}B_M(H)(z-A)^{-n}d\widetilde f(z).	
	\end{align}
	There exists a constant $C>0$ such that we have the error estimate
	\begin{equation}\label{eq:Remestimate}
		\| \mathrm{Rem}_M\| \le C  \| \mathrm{ad}_{A}^M(B)\| \sum_{k=0}^{M+2}\|f^{(k)}\|_{k-M-1}.
	\end{equation}

\end{lemma}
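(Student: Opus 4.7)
The plan is to use the Helffer--Sj\"ostrand functional calculus. Choose an almost-analytic extension $\tilde f\in C^\infty(\C)$ of $f$ with $\tilde f|_\R=f$ and with $|\bar\partial \tilde f(z)|$ vanishing to arbitrarily high order as $|\mathrm{Im}\,z|\to 0$, quantitatively in terms of the derivatives $f^{(k)}$. Writing $d\tilde f(z)=\tfrac{1}{\pi}\bar\partial \tilde f(z)\,dm(z)$ for the two-dimensional Lebesgue measure $dm$ on $\C$, one then has the representations
$$
f(A)=\int_\C (z-A)^{-1}\,d\tilde f(z),\qquad \frac{f^{(k)}(A)}{k!}=\int_\C (z-A)^{-k-1}\,d\tilde f(z),
$$
both converging in operator norm thanks to the decay of $\bar\partial \tilde f$ near the real axis and the self-adjointness of $A$.

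First I would pass the commutator through the integral so that $[B,f(A)]=\int_\C [B,(z-A)^{-1}]\,d\tilde f(z)$, and then iterate the basic resolvent commutator identity $[B,(z-A)^{-1}]=(z-A)^{-1}[B,A](z-A)^{-1}$. Commuting $\mathrm{ad}_A^{j}(B)$ past a single factor $(z-A)^{-1}$ at each step (via the same identity, applied to $\mathrm{ad}_A^{j}(B)$ in place of $B$), a short induction yields
$$
[B,(z-A)^{-1}]=\sum_{k=1}^{M-1}\mathrm{ad}_A^{k}(B)\,(z-A)^{-k-1}+(z-A)^{-1}\mathrm{ad}_A^{M}(B)(z-A)^{-M}.
$$
Integrating this identity against $d\tilde f(z)$ and invoking the two representations above produces the main sum $\sum_{k=1}^{M-1}\tfrac{1}{k!}\mathrm{ad}_A^k(B)\,f^{(k)}(A)$ in \eqref{Hcomm-exp} and identifies $\mathrm{Rem}_M$ with the expression in \eqref{Res-exp}.

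The hard part will be controlling $\|\mathrm{Rem}_M\|$. Since $A$ is self-adjoint, $\|(z-A)^{-j}\|\le|\mathrm{Im}\,z|^{-j}$, so immediately
$$
\|\mathrm{Rem}_M\|\le \|\mathrm{ad}_A^M(B)\|\cdot\frac{1}{\pi}\int_\C |\bar\partial \tilde f(z)|\,|\mathrm{Im}\,z|^{-M-1}\,dm(z).
$$
The main obstacle is then bounding this last integral by $C\sum_{k=0}^{M+2}\|f^{(k)}\|_{k-M-1}$, which is precisely the standard weighted-norm estimate for almost-analytic extensions. Choosing a Mather-type extension of the form $\tilde f(x+iy)=\tau(y/\langle x\rangle)\sum_{k=0}^{M+1}f^{(k)}(x)(iy)^{k}/k!$ for a suitable bump $\tau$ and $\langle x\rangle=(1+x^2)^{1/2}$, the factor $\bar\partial\tilde f$ produces both a $|y|^{M+1}$ gain on $\mathrm{supp}\,\tau'(\cdot/\langle x\rangle)$ and a polynomial-truncation remainder proportional to $f^{(M+2)}(x)(iy)^{M+1}$; integrating in $y$ then converts each contribution into the weight $\langle x\rangle^{k-M-1}$ paired with $|f^{(k)}(x)|$ for $0\le k\le M+2$, which is exactly the weighted norm $\|f^{(k)}\|_{k-M-1}$ after integration in $x$. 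This bookkeeping is routine but tedious and is precisely the computation carried out in \cite{SigSof,HunSig1,HunSigSof}.
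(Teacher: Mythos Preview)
Your approach is essentially identical to the paper's: both use the Helffer--Sj\"ostrand representation, iterate the resolvent commutator identity to obtain the expansion, and then bound the remainder via $\|(z-A)^{-j}\|\le|\mathrm{Im}\,z|^{-j}$ together with the standard weighted-norm estimate for the almost-analytic extension (the paper simply cites \cite{HunSig1} for the latter, whereas you sketch the Mather-type construction explicitly). The only caveat is a minor sign bookkeeping issue in your iterated resolvent identity---depending on the convention $\mathrm{ad}_A(B)=[A,B]$ versus $[B,A]$, factors of $(-1)^k$ appear in the sum---but this does not affect the remainder bound and is easily fixed.
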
 

Here and in the following, we use the convention that for $M=1$, the sum on the right-hand side of \eqref{Hcomm-exp} is omitted. 

%

\begin{proof}[Proof of Lemma \ref{lem:commut-exp}] We only sketch the proof and refer to \cite{HunSig1} for the details. 
	The proof of Lemma \ref{lem:commut-exp} relies on the Helffer-Sj\"ostrand formula for a function $f$ of a self-adjoint operator $A$ and its derivatives, i.e.
	\begin{align} \label{fA-repr}
		&f^{(k)}(A)=k! \int_{\R^2} \dd\widetilde f(z)(z-A)^{-k-1}, \quad  \dd\widetilde f(z)=-\frac{1}{2\pi}\partial_{\bar z} \bar f(z) \dd x\dd y, 
	\end{align}
where $ z=x+iy$ and $\widetilde f$ is an almost analytic extension of $f$.	We quote the following result from \cite{HunSig1}.
	
	\begin{lemma}[Lemma B.2 in \cite{HunSig1}]
		Let $M\geq 0$ and $f\in C^{M+2}(\R)$ with\\ $\sum_{k=0}^{M+1}\|f^{(k)}\|_{k-1}<\infty$. Then there exists an almost analytic extension $
		\widetilde f:\C\to\C$ of $f$ satisfying
		\begin{align} \label{tildef-est}
			&\int_{\R^2} |d\widetilde f(z)||\im(z)|^{-M-1}\le C \sum_{k=0}^{M+2}\|f^{(k)}\|_{k-M-1}
		\end{align}
		and \eqref{fA-repr} holds for all self-adjoint operators $A$. The integral in \eqref{fA-repr} converges in norm sense and is bounded uniformly in $A$.
	\end{lemma}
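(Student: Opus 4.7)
The plan is to carry out the Mather--Hörmander--Davies construction of an almost analytic extension via a truncated Taylor series with an $x$-dependent imaginary-direction cutoff. Fix $\chi\in C_c^\infty(\R)$ with $\chi=1$ on $[-\tfrac12,\tfrac12]$ and $\mathrm{supp}\,\chi\subset[-1,1]$, and write $\langle x\rangle=(1+x^2)^{1/2}$. Define
$$
\widetilde f(x+iy)=\chi\!\left(\frac{y}{\langle x\rangle}\right)\sum_{k=0}^{M+1}\frac{(iy)^k}{k!}f^{(k)}(x).
$$
Since $f\in C^{M+2}$ we have $\widetilde f\in C^{M+1}(\C)$ and $\widetilde f|_\R=f$. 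A direct computation of $\partial_{\bar z}\widetilde f=\tfrac12(\partial_x+i\partial_y)\widetilde f$ exploits the fact that the sum is the $(M{+}1)$-st Taylor polynomial of any analytic extension: the interior summands telescope pairwise between the $\partial_x$ and $i\partial_y$ contributions, leaving
$$
2\,\partial_{\bar z}\widetilde f(x+iy)=\chi\!\left(\tfrac{y}{\langle x\rangle}\right)\frac{(iy)^{M+1}}{(M+1)!}f^{(M+2)}(x)+\bigl(\partial_{\bar z}\chi\bigl(\tfrac{y}{\langle x\rangle}\bigr)\bigr)\sum_{k=0}^{M+1}\frac{(iy)^k}{k!}f^{(k)}(x).
$$
The first term vanishes to order $|y|^{M+1}$, which is precisely what is needed to absorb the singular weight $|\im z|^{-M-1}$.

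Next, I would integrate $|\partial_{\bar z}\widetilde f(z)|\,|\im z|^{-M-1}$ over $\R^2$. The first-term contribution reduces to $\int|\chi(y/\langle x\rangle)|\,dy/(M{+}1)!\cdot|f^{(M+2)}(x)|\,dx$, and since the $y$-support has length $\sim\langle x\rangle$, this is controlled by $C\int\langle x\rangle\,|f^{(M+2)}(x)|\,dx=C\|f^{(M+2)}\|_1=C\|f^{(M+2)}\|_{(M+2)-M-1}$. For the second-term contribution at a given $k$, the factor $\partial_{\bar z}\chi(y/\langle x\rangle)$ is supported on $|y|\sim\langle x\rangle$ and is bounded by $C/\langle x\rangle$; after the weight $|y|^{k-M-1}$ becomes $\sim\langle x\rangle^{k-M-1}$ on this support, the $y$-integral picks up a further $\langle x\rangle$ from the measure and $1/\langle x\rangle$ from $\partial_{\bar z}\chi$, pairing with $|f^{(k)}(x)|$ to yield $\int\langle x\rangle^{k-M-1}|f^{(k)}(x)|\,dx=\|f^{(k)}\|_{k-M-1}$. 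Summing over $k=0,\dots,M{+}1$ yields the asserted bound \eqref{tildef-est}.

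Finally, for the Helffer--Sjöstrand representation \eqref{fA-repr}, I would first establish the scalar identity $f^{(k)}(\lambda)=k!\int_{\R^2}(z-\lambda)^{-k-1}\,d\widetilde f(z)$ for each $\lambda\in\R$ via Stokes' theorem on a large rectangle minus a small disk around $\lambda$, sending the outer radius to infinity (using the compact $y$-support of $\widetilde f$) and the inner radius to zero (using that $(z-\lambda)^{-k-1}$ has the appropriate residue producing $f^{(k)}(\lambda)/k!$). The scalar identity then lifts to self-adjoint $A$ by the spectral theorem, as a Bochner integral; norm convergence and the uniform-in-$A$ bound follow from $\|(z-A)^{-k-1}\|\leq|\im z|^{-k-1}$ combined with the integrability of $|d\widetilde f(z)|\,|\im z|^{-k-1}$ just established for $0\leq k\leq M$. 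The main obstacle is a clean bookkeeping of the $\langle x\rangle$-scaled cutoff $\chi(y/\langle x\rangle)$: it is what converts the $|y|^{k-M-1}$ singular weight in the $\bar\partial$ identity into the polynomial $\langle x\rangle^{k-M-1}$ weight on $f^{(k)}$, but its $x$- and $y$-derivatives are coupled and must be tracked in unison with the Taylor-series cancellations so as not to lose either regularity at $y=0$ or decay as $|x|\to\infty$.
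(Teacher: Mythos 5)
Your proof is correct. The paper does not actually prove this lemma---it quotes it verbatim from Lemma B.2 of \cite{HunSig1} and points to the explicit construction (B.5) there---and your argument reconstructs exactly that standard construction: the $\langle x\rangle$-scaled cutoff multiplying the order-$(M+1)$ Taylor polynomial in $iy$, the telescoping of $(\partial_x+i\partial_y)$ applied to the Taylor sum leaving only the $f^{(M+2)}(x)(iy)^{M+1}/(M+1)!$ term, the weighted-norm bookkeeping on the annulus $|y|\sim\langle x\rangle$ yielding precisely the norms $\|f^{(k)}\|_{k-M-1}$, and the Cauchy--Pompeiu/Stokes derivation of the representation formula lifted to self-adjoint $A$ via the spectral theorem are all carried out correctly (up to an immaterial factor of $2$ in writing $\partial_{\bar z}\chi$ versus $(\partial_x+i\partial_y)\chi$, absorbed into $C$).
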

	
	The almost analytic extension $\tilde f$ can be defined in an explicit manner, see e.g.\ \cite[(B.5)]{HunSig1}.

	Using (B.14)-(B.15) and the remark following (B.18) of \cite{HunSig1}, we have
	\begin{align}
		\label{comm-exp1}[B, f(A)]&=\sum_{k=1}^{M-1}\frac{1}{k!}\mathrm{ad}_{A}^k(B) f^{(k)}(H)+\mathrm{Rem}_M.
	\end{align}    
	We recall the convention that for $M=1$, the sum on the right-hand side is omitted. 
	Since the operator $B_M$ is bounded, we can control the remainder via \eqref{tildef-est}, i.e.,
	\begin{align}\label{Res-est} 
		\| \mathrm{Rem}_M\|&\le  \| \mathrm{ad}_{A}^M(B)\|  \int_{\R^2} \|z-A\|^{-M-1} |d\widetilde f(z)|\\
		&\le  \| \mathrm{ad}_{A}^M(B)\|  \int_{\R^2} |\im z|^{-M-1} |d\widetilde f(z)|\\
		&		\le C \| \mathrm{ad}_{A}^M(B)\| \sum_{k=0}^{M+2}\|f^{(k)}\|_{k-M-1},
	\end{align}
	as desired.
\end{proof} 

\subsection{Basic properties of second quantization}
We begin by introducing some standard notation. Let us consider a one-particle operator $A:\ell^2(\Lam)\to\ell^2(\Lam)$, i.e., a $|\Lam|\times |\Lam|$ matrix $A$ acting as
\[
A f(x)=\sum_{y\in \Lam} A_{xy} f_y,\qquad f\in \ell^2(\Lam).
\]
We write $\mathrm{d}\Gamma(A)$ for its lift to the Fock space defined by
\begin{align} \label{Rk} 
\mathrm{d}\Gamma(A)=\sum_{x, y\in \Lam} b_x^\dagger A_{xy} b_y.\end{align}
We note that $\dd\Gamma$ is a linear map.

For instance, we can express the hopping term in the Hamiltonian \eqref{H-Lam} 
as  \begin{align} \label{T}
	T= \sum_{x, y\in \Lam} J_{xy} b_x^\dagger b_y=\dd\Gam(J),\ \text{ where  we set}\ 
 J f(x)= \sum_{y} J_{xy} f_y.\end{align} 
It is convenient to abuse notation and to identify a function $F:\Lam\to\C$ with the  multiplication operator that acts diagonally on $f
\in \ell^2(\Lam)$ via $Ff(x)=F(x)f(x)$. Then
\begin{align} \label{R g} \dd\Gam(F)=\sum_{x\in \Lam}F(x) b_x^\dagger b_x
=\sum_{x\in \Lam}F(x) n_x.
\end{align} 
For instance, we can rewrite Definitions \eqref{eq:Nlocdefn} and \eqref{eq:Ldefn} as
\[
N_U=\dd\Gam(\mathbbm 1_U),\qquad \astlo_t=\tfrac{1}{N_\Lam}\dd\Gam(\chi_t)
\]

The canonical commutation relations for $b_x$ and $b_x^\dagger$ imply the following standard relation.
  \begin{align} \label{Rk-comm}
  	[\dd\Gam(A), \dd\Gam(B)]=\dd\Gam([A,B]).\end{align} 
In particular, for functions $F,G:\Lam\to\C$, we have that $ \dd\Gam(F)$ and $ \dd\Gam(G)$ commute.

Another general property of the second quantization is that it is monotonic with respect to the partial order on Hermitian operators. That is, for Hermitian $|\Lam|\times|\Lam|$ matrices $A$ and $B$, we have
\begin{equation}\label{eq:Gammon}
A\leq B\quad\Longrightarrow\quad \dd\Gam(A)\leq \dd\Gam(B).
\end{equation}
To verify \eqref{eq:Gammon}, we diagonalize $B-A=U\mathrm{diag}(\lam_1,\ldots,\lam_{|\Lam|})U^{-1}$ and exchange the order of summation to obtain
$$
\dd\Gam(B)-\dd\Gamma(A)=\sum_j \lam_j C_j^\dagger C_j,\qquad \textnormal{with } 
C_j=\sum_y \overline{U_{yj}} b_y.
$$

The following special case of an iterated commutator will be useful.

\begin{lemma}\label{lem:HRf-com} 
	Let $k\geq 1$ and $F:\Lam\to\C$. We have
\begin{equation}\label{HRf-com}
	\mathrm{ad}^k_{\dd\Gamma(F)}(H)
=\mathrm{ad}^k_{\dd\Gamma(F)}(T)=\dd\Gamma(\mathrm{ad}^k_F(J))
\end{equation}
where $\mathrm{ad}^k_F(J)$ is the $|\Lam|\times |\Lam|$ matrix with the matrix entries
\begin{align}
	\label{kf-com} & 
\left(\mathrm{ad}^k_F(J)\right)_{xy}= J_{xy} (F(x)-F(y))^k,\qquad x,y\in\Lam.
\end{align} \end{lemma}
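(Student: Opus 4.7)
The plan is to separate $H$ into a piece that commutes with $\dd\Gamma(F)$ and the hopping $T=\dd\Gamma(J)$, then iterate the commutator-lift identity \eqref{Rk-comm}, and finally read off the matrix entries of $\mathrm{ad}^k_F(J)$ by induction on $k$. The whole statement is an algebraic identity, so the work is purely bookkeeping on the Fock space lifted from a one-particle calculation.

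For the first equality in \eqref{HRf-com}, I would observe that the potential term $\sum_{x}V_x(n_x)$ and the chemical-potential term $-\mu\sum_x n_x$ in \eqref{H-Lam} share the Fock number basis with $\dd\Gamma(F)=\sum_x F(x)n_x$, hence commute with it. Equivalently, $-\mu\sum_x n_x=\dd\Gamma(-\mu\mathbbm{1})$, so \eqref{Rk-comm} applied to the two diagonal one-particle operators $F$ and $-\mu\mathbbm{1}$ gives the commutation directly; the nonlinear $V_x(n_x)$ is handled via spectral calculus on the joint eigenbasis of the $n_x$. Hence only the hopping piece contributes to the iterated commutator, and $\mathrm{ad}^k_{\dd\Gamma(F)}(H)$ coincides with $\mathrm{ad}^k_{\dd\Gamma(F)}(T)$ up to the overall sign in front of the hopping term, which is immaterial for the applications of the lemma.

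For the second equality, I would iterate \eqref{Rk-comm}: since $T=\dd\Gamma(J)$, one commutator gives $\mathrm{ad}_{\dd\Gamma(F)}(\dd\Gamma(J))=\dd\Gamma([F,J])$, and $k$ applications combined with the linearity of $\dd\Gamma$ yield $\mathrm{ad}^k_{\dd\Gamma(F)}(\dd\Gamma(J))=\dd\Gamma(\mathrm{ad}^k_F(J))$. To identify the matrix entries \eqref{kf-com}, I would use that $F$ acts on $\ell^2(\Lam)$ as multiplication by $F(x)$, so $(FJ)_{xy}=F(x)J_{xy}$ and $(JF)_{xy}=J_{xy}F(y)$, giving the base case $([F,J])_{xy}=J_{xy}(F(x)-F(y))$. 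Assuming inductively that $(\mathrm{ad}^k_F(J))_{xy}=J_{xy}(F(x)-F(y))^k$, one further commutator with $F$ multiplies each entry by another factor of $(F(x)-F(y))$, completing the induction.

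No step presents a genuine obstacle. The only minor point requiring care is verifying that the nonlinear on-site potential $V_x(n_x)$ — which is not literally of the form $\dd\Gamma(G)$ for any single one-particle operator $G$ — still commutes with $\dd\Gamma(F)$; this follows from simultaneous diagonalizability on the occupation-number basis. Everything else is a direct consequence of the functorial properties of second quantization encoded in \eqref{Rk-comm}.
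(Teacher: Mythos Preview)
Your proposal is correct and follows essentially the same route as the paper: commute away the diagonal potential part, iterate \eqref{Rk-comm}, and finish the matrix-entry formula by induction. Your observation about the sign in front of the hopping term in \eqref{H-Lam} versus the definition of $T$ in \eqref{T} is a good catch of a notational inconsistency in the paper; as you correctly note, it is immaterial for all downstream uses of the lemma (norm bounds and estimates on $\pm\tilde B_k$).
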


\begin{proof} 
The first relation in \eqref{HRf-com} follows from the fact that $\dd\Gamma(F)$ commutes with $H-T$ since both are linear combinations of the commuting opeators $n_x$.
The second relation in \eqref{HRf-com} follows from the fact that  
 $T=\dd\Gam(J)$ and the identity \eqref{Rk-comm}. 
Finally, \eqref{kf-com} holds by a straightforward induction.    \end{proof}

In particular, Lemma \ref{lem:HRf-com}  implies that 
 the total particle number $N_\Lam$ is conserved: 
 \begin{align} \label{N-conserv} [H_\Lam,  N_\Lam] =0.
  \end{align}


\subsection{Admissible functions}
For an interval $I\subset \R$, we write $C_c^\infty\mathrm{(I)}$ for the class of smooth functions with compact support in $I$.
For the proof of Theorem \ref{thm:timederivmaintxt} , we introduce the following useful function class.

\begin{definition}\label{defn:admissible}
 Let $\xi,\eta\in [0,1]$ with $\eta<\xi$.
We introduce the class of admissible functions
$$
\mathcal A_{\eta,\xi}
=
\left\{ 		
h\in C_c^\infty((\eta,\xi))\,:\, h\geq0,\, \sqrt{h}\in C^\infty(\R)
\right\}.
$$
\end{definition}

\noindent\begin{figure}[t]
\begin{center}
    \includegraphics[scale=.65]{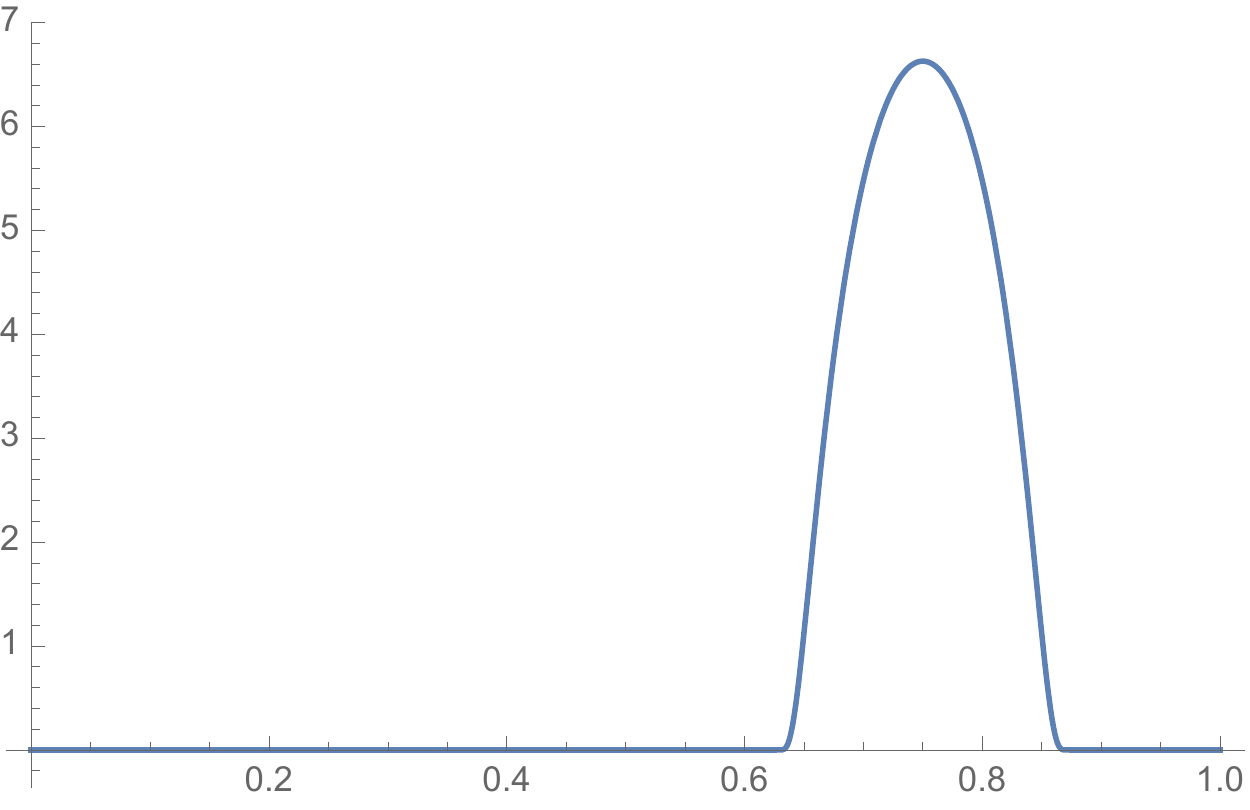}
    \caption{The derivative $h=\chi'$ of the cutoff function $\chi$ shown in Figure \ref{fig1}. Note that $h$ is an admissible function in $\mathcal A_{1/2,1}$.}
    \label{fig2}
\end{center}
\end{figure}

The following lemma shows that the elements of $\mathcal C_{\eta,\xi}$ from \eqref{Cetaxi}  can be seen as antiderivatives of admissible functions up to a multiplicative constant.  
\begin{lemma}\label{lem:antideriv}
	If $h\in \mathcal A_{\eta,\xi}$, then there exists $f\in \mathcal C_{\eta,\xi}$ so that
	\[
	h(r)=f'(r) \int_\R h(\tilde r)\dd \tilde r
	\]
\end{lemma}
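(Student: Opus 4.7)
The plan is to define $f$ as the normalized antiderivative of $h$ and simply verify each defining condition of the class $\mathcal C_{\eta,\xi}$ against the hypotheses on $h$. Set
\[
I = \int_\R h(\tilde r)\,\dd\tilde r.
\]
If $I=0$, then $h\geq 0$ continuous forces $h\equiv 0$, and the identity holds trivially with any $f\in\mathcal C_{\eta,\xi}$ (such $f$ exist, e.g.\ by mollification of the indicator of $(\xi,\infty)$). Henceforth assume $I>0$ and define
\[
f(r) = \frac{1}{I}\int_0^r h(\tilde r)\,\dd\tilde r, \qquad r\in \R_+.
\]
By construction, $f'(r)=h(r)/I$, which is precisely the identity claimed in the lemma.

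It remains to check that $f\in \mathcal C_{\eta,\xi}$. Since $h\in C^\infty_c((\eta,\xi))$, the antiderivative $f$ is smooth, and $f,f'\geq 0$ as integrals/values of the nonnegative function $h$. For the support of $f'$: $\supp f'=\supp h\subset (\eta,\xi)$. For the boundary values, $\supp h\subset(\eta,\xi)$ gives $f\equiv 0$ on $(0,\eta)$, while for $r>\xi$ we have $f(r)=I^{-1}\int_0^\infty h = 1$. Finally, the condition $\sqrt{f'}\in C^\infty(\R_+)$ reduces to $\sqrt{h}\in C^\infty(\R)$, which is exactly one of the hypotheses defining $\mathcal A_{\eta,\xi}$, up to the harmless multiplicative constant $1/\sqrt{I}$.

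There is no real obstacle here; this lemma is essentially a bookkeeping statement identifying $\mathcal C_{\eta,\xi}$ with the antiderivatives of $\mathcal A_{\eta,\xi}$. The only point worth noting is why the clause $\sqrt{h}\in C^\infty$ was built into Definition \ref{defn:admissible} in the first place: without it, one cannot guarantee smoothness of $\sqrt{f'}$, which in turn is the regularity needed to apply the commutator expansion of Lemma \ref{lem:commut-exp} to the square root factors that naturally appear when writing $f'(\astlo_t)\astlo'_t$ as $(\sqrt{f'(\astlo_t)\astlo'_t})^2$ in the Heisenberg derivative estimate of Theorem \ref{thm:timederivmaintxt}.
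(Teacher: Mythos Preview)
Your proof is correct and follows exactly the same approach as the paper: define $f$ as the normalized antiderivative $f(r)=I^{-1}\int_{-\infty}^r h$ (equivalently $\int_0^r h$, since $\supp h\subset(\eta,\xi)\subset\R_+$) and verify membership in $\mathcal C_{\eta,\xi}$. You are simply more careful than the paper in treating the degenerate case $I=0$ and in spelling out each defining condition.
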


\begin{proof}
	The lemma follows by setting  
	$$
	f(r) =\frac{\int_{-\infty}^r 	h(\tilde r)\dd \tilde r}{\int_\R h(\tilde r)\dd \tilde r}.
	$$
		\end{proof}



\subsection{Evolution of the propagation observables} 

 In this section, we calculate the Heisenberg derivative $D\Phi_s(t)$ defined in \eqref{Heis-der}.

For the first term in $D\Phi_s(t)$, cf.\ \eqref{Heis-der}, we have
\begin{align} \label{dt-Phi}
&{\partial\over{\partial t}}\Phi_s(t)=-\frac{v'}{s}\,  f^\prime(\astlo_{ts})\astlo_{ts}'.
\end{align}
with
\begin{equation}
\label{Rts'}
 	\astlo_{t,s}'=\frac{1}{N} \dd\Gam(\chi'_{t,s}),\ \text{ where $\chi'_{t,s}$ is defined in \eqref{chits'}.}\
	\end{equation}
Indeed, to verify \eqref{dt-Phi}, we note that $\astlo_{t,s}$ and $\astlo_{t,s}'$ commute and are both diagonal in the basis of Mott states \eqref{eq:mott}. On a given Mott state, \eqref{dt-Phi} then holds by the chain rule.
 
The main work is thus to consider the commutator $i[H, \Phi_s(t)]$ in \eqref{Heis-der}. Central objects in the argument are the multiple commutators:
\[
B_k=\mathrm{ad}_{\astlo_{t,s}}^k(i H),\qquad k\geq 1.
\]
We set $u_1=\sqrt{f'}$
which by  $f \in\mathcal{C}_{\eta,\xi}$ satisfies $u_1\geq 0$ and $u_1\in C_c^\infty((\eta,\xi))$. 
Furthermore, 
for $k\geqq 2$, 
we let $u_k\in C_c^\infty(\R_+)$ be s.t.  $f^{(k)}\prec u_k$, where 
we introduced the notation
\beq\label{eq:precdefn}  g_1\prec g_2\quad\stackrel{def}{\Longleftrightarrow}\quad g_2=1\textnormal{ on } \mathrm{supp}\, g_1.
 \eeq 
With these definitions, we have
\begin{lemma}\label{lm:HPhi-comm}
Assume $f\in \mathcal C_{\eta,\xi}$.  Let $\re A=\frac{i}2\big(A+A^\dagger\big)$. Then we have
\begin{align}\label{HPhi-comm}
&i[H, \Phi_s(t)]=u_1  B_1 u_1+S + R,\\
\label{S}&S=\sum_{k=2}^{p-1}  u_k \re \big( B_k f^{(k)}\big)u_k \notag\\ 
&\hspace{2cm}+\sum_{k=1}^{p-1}\sum_{j=1}^{p-k-1} \frac{(-1)^j}{j!} \re \big(u_k^{(j)}(\astlo_{t,s}) B_{k+j} g_k u_k\big),\\
\label{R}&R=\re \bigg( \mathrm{Rem}_p(\astlo_{t,s},f)+ \sum_{k=1}^{p-1}\mathrm{Rem}_{p-k}(B_k,u_k)^\dagger g_k u_k\bigg), 
\end{align}
where $f^{(1)}\equiv f'$ and 
 $\mathrm{Rem}_p(A,f)$ is defined in \eqref{Res-exp}.\end{lemma}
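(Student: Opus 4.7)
The plan is to prove the formula through a controlled double application of the commutator expansion from Lemma~\ref{lem:commut-exp}, together with a symmetrization step designed to exhibit the self-adjointness of $i[H,\Phi_s(t)]$. First I would apply Lemma~\ref{lem:commut-exp} with $A=\astlo_{t,s}$, $B=iH$, and order $M=p$ to obtain
\begin{equation*}
i[H,f(\astlo_{t,s})] = \sum_{k=1}^{p-1}\frac{1}{k!}B_k f^{(k)}(\astlo_{t,s}) + \mathrm{Rem}_p(\astlo_{t,s},f),
\end{equation*}
where $B_k=\mathrm{ad}^k_{\astlo_{t,s}}(iH)$. Since $\astlo_{t,s}$ commutes with the total number operator $N_\Lambda$, which is conserved by $H$ by \eqref{N-conserv}, I can invoke Lemma~\ref{lem:commut-exp} on each fixed $N_\Lambda$-sector, where $H$ acts boundedly, so the hypotheses are satisfied.

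The next step exploits the key property of the functions $u_k$: since $u_k=1$ on $\mathrm{supp}\,f^{(k)}$ (i.e.\ $f^{(k)}\prec u_k$), one has the functional identities $f^{(k)}=u_k f^{(k)} u_k$ for $k\geq 2$ and $f'=u_1^2$ for $k=1$. I would then insert these identities into each summand and commute one copy of $u_k(\astlo_{t,s})$ across $B_k$ using the basic identity $B_k u_k = u_k B_k + [B_k,u_k]$. For $k=1$ this yields the main term $u_1 B_1 u_1$, which is manifestly self-adjoint because $B_1=i[\astlo_{t,s},H]$ is self-adjoint and $u_1$ is a real-valued function of a self-adjoint operator. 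For $k\geq 2$ it produces the symmetric form $u_k B_k f^{(k)} u_k$; after passing to real parts one obtains the first summand of $S$.

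The resulting cross terms $[B_k,u_k(\astlo_{t,s})]\,f^{(k)}\,u_k$ are then treated by a second application of Lemma~\ref{lem:commut-exp}, this time with $A=\astlo_{t,s}$, $B=B_k$, $f=u_k$, expanded to order $p-k$. Using the identity $\mathrm{ad}^j_{\astlo_{t,s}}(B_k)=B_{k+j}$ and moving the derivative factor $u_k^{(j)}$ to the left of $B_{k+j}$ (equivalently, using the ``dual'' form of the Helffer--Sj\"ostrand expansion, which produces the alternating sign $(-1)^j$), one obtains the double sum in $S$ with the prefactors $(-1)^j/j!$ and the remainders $\mathrm{Rem}_{p-k}(B_k,u_k)$. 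Collecting the leftover remainders from both expansions and taking real parts assembles $R$ in the stated form.

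The main obstacle I anticipate is the careful sign and ordering bookkeeping in the second expansion: one must pass the derivatives $u_k^{(j)}(\astlo_{t,s})$ through $B_{k+j}$ on the correct side to match both the ordering $u_k^{(j)} B_{k+j} g_k u_k$ and the alternating sign, and one must simultaneously symmetrize by taking adjoints so that each collected contribution in $S$ is the self-adjoint part $\re(\,\cdot\,)$ of its expression. A secondary but routine point is verifying that the norm and weighted-$L^1$ bounds on $f^{(k)}$ and $u_k^{(j)}$ inherited from $f\in\mathcal{C}_{\eta,\xi}$ are sufficient to apply \eqref{eq:Remestimate} to both layers of remainders, which is needed only to guarantee that $R$ is well defined; its quantitative size will be estimated later when bounding $D\Phi_s(t)$.
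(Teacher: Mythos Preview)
Your proposal is correct and follows essentially the same route as the paper: a first Helffer--Sj\"ostrand/commutator expansion of $i[H,f(\astlo_{t,s})]$ to order $p$, the insertion $f^{(k)}=u_k g_k u_k$ coming from $f^{(k)}\prec u_k$ (with $g_1=1$ since $u_1=\sqrt{f'}$), a second expansion of $[B_k,u_k(\astlo_{t,s})]$ in its adjoint form to produce the $(-1)^j$ signs and the left-placed $u_k^{(j)}$, and a final symmetrization using self-adjointness of $i[H,\Phi_s(t)]$. Your extra remark that the boundedness hypotheses of Lemma~\ref{lem:commut-exp} are met by restricting to fixed $N_\Lambda$-sectors is a welcome clarification the paper leaves implicit.
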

\begin{proof}[Proof of Lemma \ref{lm:HPhi-comm}]
The assumption $f\in \mathcal C_{\eta,\xi}$ implies $f'\in C_c^\infty(\R_+)$. Hence we can apply Lemma \ref{lem:commut-exp} to $i[H, \Phi_s(t)]=i[H,f(\astlo_{t,s})]$ to obtain, for $p\ge 2$
$$
i[H, \Phi_s(t)] 
= \sum_{k=1}^{p-1} \frac{1}{k!} B_k f^{(k)}(\astlo_{t,s})+ \mathrm{Rem}_p(\astlo_{t,s},f). 
$$
Next, we symmetrize this expression up to another commutator.

Defining 
$g_k=f^{(k)}$ for $k\geq 1$, where $f^{(1)}\equiv f'$ and recalling  $f^{(k)}\prec u_k$, so that $f^{(k)}=f^{(k)}u_k^2$, we write 
$$
f^{(k)}=g_k u_k^2,\qquad k\geq 1.
$$
 In the following, for the sake of readability, we often suppress the argument $\astlo_{t,s}$ from the notation. We have
$$
B_kf^{(k)}
= u_k B_k g_k  u_k
+[B_k,u_k] g_ku_k,\qquad k\geq 1.
$$
The commutator $[B_k, u_k] \equiv [B_k,u_k(\astlo_{t,s})]$ can be further expanded via the adjoint version of Lemma \ref{lem:commut-exp},
$$
[B_k, u_k] 
=\sum_{j=1}^{p-1-k} \frac{(-1)^j}{j!}u_k^{(j)} 
 B_{k+j} +\mathrm{Rem}_{p-k}(B_k, u_k)^\dagger.
$$

Combining these commutator expansions, we obtain
\begin{align}
\label{eq:commI+II}
i[H, \Phi_s(t)]&=\mathrm{(I)}+\mathrm{(II)}+\mathrm{(III)},\\
\label{eq:Idefn}
\mathrm{(I)}=& u_1 B_1 u_1,\\
\label{eq:IIdefn}
\mathrm{(II)}=& \sum_{k=2}^{p-1}  u_k B_k g_k u_k
+ \sum_{k=1}^{p-1}\sum_{j=1}^{p-k-1} \frac{(-1)^j}{j!}u_k^{(j)} B_{k+j} g_k u_k,\\
\label{eq:IIIdefn}
\mathrm{(III)}=& \mathrm{Rem}_p(\astlo_{t,s},f)+ \sum_{k=1}^{p-1}\mathrm{Rem}_{p-k}(B_k,u_k)^\dagger g_k u_k.
\end{align}
Since $i[H, \Phi_s(t)]$ is self-adjoint, we have that $i[H, \Phi_s(t)]=\mathrm{(I)}+\re\big(\mathrm{(II)}\big)+\re\big(\mathrm{(III)}\big)$, which gives \eqref{HPhi-comm}. \end{proof}
\DETAILS{In view of \eqref{eq:Idefn} and \eqref{eq:IIdefn}, we have to provide operator inequalities on
$$
\begin{aligned}
&(I)=iu_1 B_1 u_1,\\
\frac{\mathrm{(II)}+\mathrm{(II)}^\dagger}{2}
&=\frac{i}{2}\sum_{k=2}^{p-1}  u_k (B_k g_k+(B_k g_k)^\dagger) u_k\\
&+\frac{i}{2}\sum_{k=1}^{p-1}\sum_{j=1}^{p-k-1} \frac{(-1)^j}{j!}\left(u_k^{(j)}(\astlo_{t,s}) B_{k+j} g_k u_k+\left(u_k^{(j)}(\astlo_{t,s}) B_{k+j} g_k u_k\right)^\dagger
\right)
\end{aligned}
$$
and $\mathrm{(III)}$. This equations together with \eqref{eq:commI+II} and \eqref{eq:IIIdefn} imply \eqref{HPhi-comm}.}

\section{Proof of Theorem \ref{thm:timederivmaintxt} } 
 In the next subsections, we consider the symmetrized expansion \eqref{HPhi-comm} 
  and  estimate the three terms on the r.h.s.\ 
  in reverse order, starting with the norm bound on the remainder term $R$ which is the easiest.

\subsection{Controlling the remainder term $R$}
We first show that the remainder term $R$ in \eqref{HPhi-comm} is small as $s\to\infty$.

\begin{lemma}\label{lm:IIIest}
There exists a constant $C>0$ such that
$$
\|R\|\leq C s^{-p}\kappa^{(p)}_J,\qquad s\geq 1.
$$	
\end{lemma}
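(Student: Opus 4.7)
The plan is to apply the general commutator-expansion error estimate \eqref{eq:Remestimate} of Lemma \ref{lem:commut-exp} to each of the two types of remainders in \eqref{R}, reducing both to a uniform norm bound on the top-order iterated commutator $B_p=\mathrm{ad}^p_{\astlo_{t,s}}(iH)$. For the first remainder, \eqref{eq:Remestimate} applied with $A=\astlo_{t,s}$, $B=iH$, $M=p$ yields
$$
\|\mathrm{Rem}_p(\astlo_{t,s},f)\|\le C\,\|B_p\|\sum_{k=0}^{p+2}\|f^{(k)}\|_{k-p-1},
$$
and the sum on the right is a finite constant depending only on the fixed cutoff $f\in\mathcal C_{\eta,\xi}$. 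For each of the remainders $\mathrm{Rem}_{p-k}(B_k,u_k)$ with $1\le k\le p-1$, the same bound with $A=\astlo_{t,s}$, $B=B_k$, $M=p-k$ gives $\|\mathrm{Rem}_{p-k}(B_k,u_k)\|\le C\,\|\mathrm{ad}^{p-k}_{\astlo_{t,s}}(B_k)\|=C\|B_p\|$, since the iterated adjoint action telescopes back to $B_p$. The factor $\|g_k u_k\|_\infty$ multiplying it in \eqref{R} is a harmless constant. Hence the lemma reduces to the uniform bound $\|B_p\|\le Cs^{-p}\kappa_J^{(p)}$.

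To establish this, I would prove the more general estimate $\|B_k\|\le Cs^{-k}\kappa_J^{(k)}$ for every $k=1,\ldots,p$, uniformly in $\Lambda$ and in the total particle number. Since $H$ commutes with $N_\Lambda$ by \eqref{N-conserv}, I restrict to the $N$-particle sector, on which $\astlo_{t,s}=\dd\Gamma(\chi_{t,s})/N$. Iterating \eqref{Rk-comm} together with Lemma \ref{lem:HRf-com} gives
$$
B_k=iN^{-k}\,\dd\Gamma\!\bigl(\mathrm{ad}^k_{\chi_{t,s}}(J)\bigr),\qquad \bigl(\mathrm{ad}^k_{\chi_{t,s}}(J)\bigr)_{xy}=J_{xy}\bigl(\chi_{t,s}(|x|)-\chi_{t,s}(|y|)\bigr)^k.
$$
The mean value theorem applied to the smooth profile \eqref{eq:chitsdefn} yields $|\chi_{t,s}(|x|)-\chi_{t,s}(|y|)|\le \|\chi'\|_\infty s^{-1}|x-y|$, hence $|(\mathrm{ad}^k_{\chi_{t,s}}(J))_{xy}|\le \|\chi'\|_\infty^k s^{-k}|J_{xy}||x-y|^k$. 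The Schur test then bounds the one-particle operator norm of $\mathrm{ad}^k_{\chi_{t,s}}(J)$ on $\ell^2(\Lambda)$ by $\|\chi'\|_\infty^k s^{-k}\kappa_J^{(k)}$ via \eqref{kaJp-def}. Combined with the elementary bound $\|\dd\Gamma(M)\|\le N\|M\|$ on the $N$-sector, this gives $\|B_k\|\le N^{1-k}\|\chi'\|_\infty^k s^{-k}\kappa_J^{(k)}\le Cs^{-k}\kappa_J^{(k)}$ uniformly in $N\ge 1$. Specializing to $k=p$ and summing the finitely many contributions to $R$ yields the claimed bound.

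The main obstacle is securing uniformity in the total particle number $N_\Lambda$. Naively, since $\astlo_{t,s}$ carries a prefactor $1/N_\Lambda$ whereas $H$ does not, each extra commutator with $iH$ could be feared to introduce bad $N$-dependence. The key cancellation is that $\dd\Gamma(M)$ acts as a sum of $N$ copies of the one-particle operator $M$ on the $N$-sector, contributing exactly one compensating factor of $N$; the residual $N^{1-k}$ for $k\ge 1$ is harmless. A minor secondary point is that the argument also invokes $\kappa_J^{(k)}$ for intermediate $k\in\{1,\dots,p-1\}$; these are finite under the assumption $\kappa_J^{(p)}<\infty$ via the elementary comparison $|x-y|^k\le 1+|x-y|^p$, and are absorbed into the generic constant $C$.
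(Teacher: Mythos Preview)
Your proposal is correct and follows essentially the same route as the paper: reduce $\|R\|$ via the remainder estimate \eqref{eq:Remestimate} to a bound on $\|B_p\|$, then control $\|B_p\|$ through the explicit matrix elements of $\mathrm{ad}^p_{\chi_{t,s}}(J)$, the mean value theorem, and the Schur bound involving $\kappa_J^{(p)}$. The only cosmetic difference is in the last step: the paper bounds $\tfrac1N\dd\Gamma(\mathrm{ad}^p_{\chi_{t,s}}(J))$ directly on Fock space via an operator Cauchy--Schwarz inequality $b_x^\dagger b_y + b_y^\dagger b_x \le n_x+n_y$, whereas you first bound the one-particle norm $\|\mathrm{ad}^p_{\chi_{t,s}}(J)\|$ by the Schur test and then invoke $\|\dd\Gamma(M)\|\le N\|M\|$ on the $N$-sector; both arguments yield the same $Cs^{-p}\kappa_J^{(p)}$. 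Incidentally, your power $N^{-k}$ in $B_k=iN^{-k}\dd\Gamma(\mathrm{ad}^k_{\chi_{t,s}}(J))$ is the correct one (each $\mathrm{ad}_{\astlo_{t,s}}$ contributes a factor $N^{-1}$), so your uniformity-in-$N$ discussion is on point.
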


\begin{proof}[Proof of Lemma \ref{lm:IIIest}]
By the remainder estimate \eqref{eq:Remestimate}, we have
$$
\|\mathrm{Rem}_p(\astlo_{t,s},f)\|\leq  \| \mathrm{ad}_{\astlo_{t,s}}^p(H)\| \sum_{k=0}^{p+2}\|f^{(k)}\|_{k-p-1}
\leq C\|B_p\|.
$$
Similarly, using that $g_k\prec u_k$ and that $\|g_k(\astlo_{t,s})\|\leq \|g_k\|_\infty$ by the functional calculus,
\begin{align*}
\|\mathrm{Rem}_{p-k}(B_k,u_k)^\dagger g_k u_k\|
&\leq \|\mathrm{Rem}_{p-k}(B_k,u_k)\| \|g_k\|_\infty\\
& \leq \sum_{l=0}^{p-k+2}\|u_k^{(l)}\|_{l-p+k-1} \|B_p\|
\leq C \|B_p\|,\quad k\geq 1.
\end{align*}

We see that it remains to prove
\begin{equation}\label{eq:Bpestimate}
\|B_p\|\leq s^{-p}\kappa^{(p)}_J.
\end{equation} 

We recall that $\astlo_{t,s}=N^{-1}\dd\Gam(\chi_{t,s})$ and use Lemma \ref{lem:HRf-com}  with $F=\chi_{t,s}(|\cdot|)$ to write
\begin{align}\label{eq:Bprewrite}
B_p=\mathrm{ad}_{\astlo_{t,s}}^p(i H)&=\frac{1}{N}i \dd\Gamma(\mathrm{ad}^p_{\chi_{t,s}} (J))\notag\\
&=
i \frac{1}{N}\sum_{x,y\in\Lam} (\chi_{t,s}(|x|)-\chi_{t,s}(|y|))^p J_{xy} b_x^\dagger b_y.
\end{align}
Denote $\tilde B_k= i^{k-1} B_k=i^k\mathrm{ad}_{\astlo_{t,s}}^k( H)$. By applying the operator Cauchy-Schwarz inequality to the self-adjoint operators $i^p (\chi_{t,s}(|x|)-\chi_{t,s}(|y|))^p J_{xy} b_x^\dagger b_y$ and the symmetry $J_{yx}=J_{xy}$, we obtain
\begin{equation}\label{eq:BpestimateCS1}
\tilde B_p
\leq \frac{1}{N}\sum_{x,y\in\Lam} |\chi_{t,s}(|x|)-\chi_{t,s}(|y|)|^p |J_{xy}|n_x
\end{equation}

Finally the mean-value theorem implies
$$
|\chi_{t,s}(|x|)-\chi_{t,s}(|y|)|
\leq s^{-1} ||x|-|y|| \|\chi_{t,s}'\|_\infty
\leq s^{-1} C |x-y|
$$
and so $\tilde B_p \leq s^{-p} C \kappa^{(p)}_J$. This proves \eqref{eq:Bpestimate} 
 and hence Lemma \ref{lm:IIIest}.
\end{proof}

\subsection{Estimating the symmetrized subleading term $S$}
The argument used to prove Lemma \ref{lm:IIIest} can be refined if we replace the application of the mean-value theorem by iterated Taylor expansion. This is precisely what is needed for the subleading term $S$ in \eqref{HPhi-comm}.

We recall that we assume that $\chi$ belongs to the following space of cutoff functions
\beq\label{eq:parentdefn}
\begin{aligned}
\mathcal C_{1/2,1}
=
\Big\{ 		
\chi\in &C^\infty(\R_+)\,:\, \chi,\chi'\geq0,\; \sqrt{\chi'}\in C^\infty(\R_+),\ \mathrm{supp}\, \chi'\subset (\tfrac{1}{2},1),\\
&\hspace{1.5cm}\chi(r)=0 \textnormal{ for }r\leq  \tfrac{1}{2},\; \chi(r)=1\textnormal{ for }r\geq 1
\Big\}.
\end{aligned}
\eeq


\begin{proposition}\label{prop:IIest}
There exist a constant $C>0$ and functions $\tilde\chi\in\mathcal C_{1/2,1}$ , $h\in \mathcal{A}_{\eta,\xi}$ such that
\begin{align}\label{S-bnd}
S \leq h(\tilde \astlo_{t,s}) \tilde \astlo'_{t,s} +Cs^{-p},
\end{align}
where, recall, the operators $\tilde \astlo_{t,s}$ and $\tilde \astlo_{t,s}'$ are defined in Theorem \ref{thm:timederivmaintxt}.	\end{proposition}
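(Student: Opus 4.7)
The plan is to refine the crude norm bound of Lemma \ref{lm:IIIest} via an iterated Taylor expansion of $\chi_{t,s}(|x|)-\chi_{t,s}(|y|)$ inside each $B_k$. Every $B_m$ appearing in \eqref{S} has $m \geq 2$ (the first sum has $k\ge 2$; in the second $k+j\ge 2$ since $j\ge 1$), so the leading power one can extract is $s^{-2}$, improving on the norm-only bound $\|B_m\|\lesssim s^{-m}$. Using Lemma \ref{lem:HRf-com}, $B_m = (i/N_\Lambda)\dd\Gamma(M_m)$ with $(M_m)_{xy}=J_{xy}(\chi_{t,s}(|x|)-\chi_{t,s}(|y|))^m$, and Taylor expanding
\[
\chi_{t,s}(|y|)-\chi_{t,s}(|x|) = \sum_{\ell=1}^{p-1} \frac{(|y|-|x|)^\ell}{\ell!\,s^\ell}\chi^{(\ell)}\!\left(\tfrac{|x|-R_{\min}(X)-v't}{s}\right) + \rho_p(x,y),
\]
with $|\rho_p(x,y)|\leq Cs^{-p}|x-y|^p$, and raising to the $m$-th power produces a finite sum of monomials of the form $s^{-q}(|y|-|x|)^q G_q(|x|)$ for $m\leq q\leq p-1$, where each $G_q$ is a product of derivatives $\chi^{(\ell_j)}$ with $\ell_j\geq 1$, plus a cross/pure remainder bounded (in norm, after Cauchy--Schwarz) by $Cs^{-p}\kappa_J^{(p)}$.

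Next I would reduce each surviving bilinear operator $T_q = (1/N_\Lambda)\sum_{x,y} J_{xy}s^{-q}(|y|-|x|)^q G_q(|x|)b_x^\dagger b_y$ to diagonal form. The operator Cauchy--Schwarz $b_x^\dagger b_y + b_y^\dagger b_x \leq n_x+n_y$ (used as in \eqref{eq:BpestimateCS1}) together with $J_{xy}=J_{yx}$ gives
\[
\re T_q \leq \frac{Cs^{-q}}{N_\Lambda}\sum_x \Big(\sum_y|J_{xy}||x-y|^q\Big)|G_q(|x|)|\,n_x,
\]
with the inner sum controlled by $\kappa_J^{(q)}<\infty$ (by interpolation between $\kappa_J^{(1)}$ and $\kappa_J^{(p)}$). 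Each $G_q$ is supported where $(|x|-R_{\min}(X)-v't)/s$ lies in some compact sub-interval $I_m\subset(0,1)$ (since each factor $\chi^{(\ell_j)}$ has $\ell_j\geq 1$ and hence is supported in $\supp\chi'\subset(\tfrac12,1)$). Choosing $\tilde\chi\in\mathcal{C}_{1/2,1}$ with $\tilde\chi\geq\chi$ globally and $\tilde\chi'\geq c>0$ on $I_m$ (construct via Lemma \ref{lem:antideriv} applied to a suitable $h_0\in\mathcal{A}_{1/2,1}$) yields $|G_q(|x|)|\leq C\tilde\chi'_{t,s}(|x|)$ and hence $\re T_q \leq Cs^{-q}\tilde\astlo'_{t,s}$.

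Finally, to absorb the dressings $u_k(\astlo_{t,s})$, $u_k^{(j)}(\astlo_{t,s})$ into a single factor $h(\tilde\astlo_{t,s})$: all of $\astlo_{t,s}$, $\tilde\astlo_{t,s}$, $\tilde\astlo'_{t,s}$ are $N_\Lambda^{-1}\dd\Gamma$ of position-diagonal functions and thus lie in the commutative algebra generated by $\{n_x\}_{x\in\Lambda}$; the choice $\tilde\chi\geq\chi$ gives $\astlo_{t,s}\leq\tilde\astlo_{t,s}$, and joint spectral calculus on this commuting pair dominates any finite product of $u_k$'s (all supported in $(\eta,\xi)$) by a single $h\in\mathcal{A}_{\eta,\xi}$ evaluated at $\tilde\astlo_{t,s}$. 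Summing over the finite range of indices $k,j$ in \eqref{S} then delivers \eqref{S-bnd}. The main obstacle is precisely this last absorption: because $B_{k+j}$ does not commute with $u_k(\astlo_{t,s})$, one cannot pull a function of $\tilde\astlo_{t,s}$ outside a priori; commutativity is only restored after the Taylor expansion and Cauchy--Schwarz reduction convert each $B$-factor to a position-diagonal operator, at which point the remaining bookkeeping reduces to joint spectral calculus on a commuting family.
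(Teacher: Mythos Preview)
Your Taylor-plus-support strategy mirrors the paper's Steps 1--2 (Lemmas \ref{lm:step1localize} and \ref{lm:IIest}), and your identification of the main obstacle---the asymmetric sandwich $w_1 B_{k+j} w_2$ with $w_1\neq w_2$---is exactly right. But your proposed resolution does not close the gap. The Cauchy--Schwarz step yields an operator \emph{inequality} $\pm\re T_q \le D_q$ with $D_q$ diagonal; it does not ``convert'' $B$ to a diagonal operator. Operator inequalities are not stable under asymmetric conjugation: from $A\le D$ one cannot conclude $\re(w_1 A w_2)\le \re(w_1 D w_2)$ when $w_1\neq w_2$. So the sentence ``commutativity is only restored after the Cauchy--Schwarz reduction'' is the point where the argument breaks. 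A related problem arises in your final absorption: from $\astlo_{t,s}\le\tilde\astlo_{t,s}$ and joint spectral calculus you cannot dominate a function in $\mathcal A_{\eta,\xi}$ of $\astlo_{t,s}$ by one of $\tilde\astlo_{t,s}$, because $h\in\mathcal A_{\eta,\xi}$ is compactly supported---on a joint eigenvector with eigenvalue pair $(a,b)$, $a\le b$, one may have $a\in(\eta,\xi)$ but $b>\xi$, forcing $h(b)=0$.

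The paper's Step~3 resolves the asymmetry by a different mechanism. Having proved $\pm\tilde B_k\le cs^{-k}\tilde\astlo'_{t,s}+Cs^{-p}$, it rewrites this as the norm bound $\|D(B_k-Cs^{-p})D\|\le cs^{-k}$ with $D=(\tilde\astlo'_{t,s}+\varepsilon)^{-1/2}$. The point is that $D$, being a function of $\tilde\astlo'_{t,s}$, commutes with every $w_j(\astlo_{t,s})$ (all lie in the abelian algebra generated by $\{n_x\}$), so $D\,\re(w_1 B_k w_2)\,D=\re(w_1\,DB_kD\,w_2)$, and now $\|w_1\|,\|w_2\|$ can be pulled out as scalar norms. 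Converting back gives $\re(w_1 B_k w_2)\le cs^{-k}\tilde\astlo'_{t,s}+Cs^{-p}$ directly; the outer localisation $h(\astlo_{t,s})$ is then obtained by the trivial identity $S=v(\astlo_{t,s})Sv(\astlo_{t,s})$ for a suitable $v\in C_c^\infty((\eta,\xi))$ dominating all the $u_k^{(j)}$. This norm-conjugation trick (or, alternatively, a polarisation identity writing $w_1 B w_2+w_2 B w_1$ as a difference of symmetric conjugates) is the missing ingredient in your sketch.
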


In the remainder of this subsection, we prove Proposition \ref{prop:IIest} in three separate steps. We begin by setting up convenient notation for Taylor expansions. Fix $1\leq k\leq p$. We can use Lemma \ref{lem:HRf-com}  with $F=\chi_{t,s}$ to write
\begin{equation}\label{eq:Bkrewrite}
B_k=\mathrm{ad}_{\astlo_{t,s}}^k(i H)=i \frac{1}{N}\dd\Gamma(\mathrm{ad}^k_{\chi_{t,s}} (J))
\end{equation}
 Therefore the main object we aim to control is the iterated commutator 
\beq\label{eq:matrielt}
(\mathrm{ad}^k_{\chi_{t,s}} (J))_{xy}
=(\chi_{t,s}(|x|)-\chi_{t,s}(|y|))^k J_{xy}.
\eeq

By Taylor's theorem with Lagrange remainder, we have the option to expand for any $L\geq 0$
\begin{align*}
\chi_{t,s}(|x|)-\chi_{t,s}(|y|)
&=\sum_{\ell=1}^{L-1} \frac{\chi_{t,s}^{(\ell)}(|x|)}{\ell!} (|x|-|y|)^\ell 
+ R_L\\
&=\sum_{\ell=1}^{L-1} s^{-\ell} \frac{(\chi^{(\ell)})_{t,s}(|x|)}{\ell!} (|x|-|y|)^\ell 
+ R_L,
\end{align*}
with the remainder bound 
$|R_L|\leq s^{-L} \frac{\|\chi^{(L)}\|_\infty}{L!} ||x|-|y||^L
\leq s^{-L} C |x-y|^{L}$.  It is convenient to introduce the notation
\begin{align}\label{eq:chitaylor}
\chi_{t,s}(|x|)&-\chi_{t,s}(|y|)
=\sum_{\ell=1}^L\mathcal T^{(L)}_\ell,\\  
 \notag&\textnormal{with } \mathcal T_l^{(L)}=
\begin{cases}
s^{-\ell} \frac{(\chi^{(\ell)})_{t,s}(|x|)}{\ell!} (|x|-|y|)^\ell,
& \textnormal{for } 1\leq \ell\leq L-1\\
R_L, \quad &\textnormal{for } \ell=L.
\end{cases}
\end{align} 
We note that all terms in the expansion satisfy a bound of the form
\beq\label{eq:mathcalTlbound}
|\mathcal T^{(L)}_\ell|\leq C s^{-\ell} |x-y|^\ell,\qquad 1\leq\ell\leq L,
\eeq
where the constant $C$ only depends on $\ell$ and $\chi_{t,s}$.\\

\textbf{Step 1: Symmetrically preserving support information.}
We introduce localizing functions on the left and right side of the Hermitian matrix $i^k  (\mathrm{ad}^k_{\chi_{t,s}} (J))$. This symmetric sandwiching is needed for proving an operator inequality of the form \eqref{S-bnd}. 

\begin{lemma}\label{lm:step1localize}
There exist constants $c,C>0$ and a function $\tilde\chi\in\mathcal C_{1/2,1}$ such that
\beq\label{eq:lmstep1localize}
i^k  (\mathrm{ad}^k_{\chi_{t,s}} (J))
=c \sqrt{\tilde\chi'_{t,s}} i^k  (\mathrm{ad}^k_{\chi_{t,s}} (J)) \sqrt{\tilde\chi'_{t,s}}
+\mathcal R,
\eeq
where $\mathcal R$ is a Hermitian matrix satisfying the norm bound
\beq\label{eq:Rnormbound}
\|\mathcal R\|\leq C \kappa_J^{(p)} s^{-p}.
\eeq
\end{lemma}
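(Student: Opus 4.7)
\emph{Plan.} The plan is to establish the sandwich identity \eqref{eq:lmstep1localize} by a direct construction of $\tilde\chi$ together with a bulk--boundary case analysis. Writing $M := i^k\,\mathrm{ad}_{\chi_{t,s}}^k(J)$ and viewing $\sqrt{\tilde\chi'_{t,s}}$ as a (real) diagonal multiplication operator on $\ell^2(\Lambda)$, the remainder has the entry-wise form
\[
\mathcal{R}_{xy} \;=\; M_{xy}\Bigl(1 - c\sqrt{\tilde\chi'_{t,s}(|x|)\,\tilde\chi'_{t,s}(|y|)}\,\Bigr),
\qquad
M_{xy} = i^k\bigl(\chi_{t,s}(|x|) - \chi_{t,s}(|y|)\bigr)^k J_{xy}.
\]
The Hermiticity of $\mathcal{R}$ follows from that of $M$, which is itself a consequence of $J_{yx} = J_{xy}$ together with $\overline{i^k} = (-i)^k = (-1)^k i^k$ and $(\chi_{t,s}(|y|) - \chi_{t,s}(|x|))^k = (-1)^k(\chi_{t,s}(|x|) - \chi_{t,s}(|y|))^k$.

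First I would construct $\tilde\chi \in \mathcal{C}_{1/2,1}$ so that $\tilde\chi'(\rho)$ equals a positive constant $\gamma$ on a closed interval $K = [a_0, b_0] \subset (1/2, 1)$ strictly containing $\mathrm{supp}\,\chi'$, with $\tilde\chi'$ monotonically tapered to $0$ outside $K$ but still compactly supported inside $(1/2,1)$, and set $c := \gamma^{-1}$. This is possible because $\mathrm{supp}\,\chi' \subset (1/2,1)$ is compact, leaving room for both $K$ and a smooth taper carrying the extra mass required by the normalization $\int \tilde\chi' = \tilde\chi(1) - \tilde\chi(1/2) = 1$. To secure the class requirement $\sqrt{\tilde\chi'} \in C^\infty(\R_+)$, I would design $\tilde\chi' = \psi^2$ for a non-negative $C^\infty$ function $\psi$ equal to $\sqrt\gamma$ on $K$ and tapering smoothly to $0$ inside $(1/2,1)$, via a standard mollifier construction. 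By monotonicity of the taper, one also has $\tilde\chi'(\rho) \leq \gamma$ everywhere, so $c\sqrt{\tilde\chi'_{t,s}(|x|)\tilde\chi'_{t,s}(|y|)} \in [0,1]$ for all $x,y$.

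Next I would decompose the entries of $\mathcal R$ according to the rescaled positions $\rho_x := (|x| - R_{\min}(X) - v't)/s$ and $\rho_y$ defined analogously. In the \emph{bulk}, where $\rho_x, \rho_y \in K$, one has $\tilde\chi'_{t,s}(|x|) = \tilde\chi'_{t,s}(|y|) = \gamma$ by construction, so $c\sqrt{\gamma\cdot\gamma} = 1$ and $\mathcal{R}_{xy} = 0$ identically. In the \emph{boundary}, at least one of $\rho_x, \rho_y$ lies outside $K$. The crucial geometric observation is that $M_{xy} \neq 0$ forces the interval $[\min(\rho_x,\rho_y), \max(\rho_x,\rho_y)]$ to meet $\mathrm{supp}\,\chi'$; combined with the strict inclusion $\mathrm{supp}\,\chi' \subset \mathrm{int}\,K$, a short enumeration of subcases (each of $\rho_x, \rho_y$ below, inside, or above $K$) yields $|\rho_x - \rho_y| \geq \delta_0$ with $\delta_0 := \mathrm{dist}(\mathrm{supp}\,\chi', \R\setminus K) > 0$. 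Hence $|x - y| \geq \bigl||x| - |y|\bigr| = s|\rho_x - \rho_y| \geq \delta_0 s$ throughout the support of $\mathcal R$, and combining the trivial bound $|\mathcal R_{xy}| \leq |M_{xy}| \leq |J_{xy}|$ with the Schur test (applicable since $\mathcal R$ is Hermitian, so row and column sums coincide) yields
\[
\|\mathcal R\| \;\leq\; \max_{x\in\Lambda}\sum_{y:\,|x-y|\geq \delta_0 s}|J_{xy}| \;\leq\; (\delta_0 s)^{-p}\kappa_J^{(p)} \;=\; C\,\kappa_J^{(p)}\,s^{-p}.
\]

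The main obstacle is producing a $\tilde\chi$ simultaneously compatible with the smoothness demands of $\mathcal{C}_{1/2,1}$ (notably $\sqrt{\tilde\chi'}\in C^\infty$) and with $\tilde\chi'$ being exactly constant on a strict super-neighborhood of $\mathrm{supp}\,\chi'$; the squared-bump design above resolves this. The only other nontrivial step is the geometric enumeration establishing $|\rho_x - \rho_y| \geq \delta_0$ on the boundary support of $\mathcal R$, which is a finite case check according to the position of $\rho_x, \rho_y$ relative to $K$ and $\mathrm{supp}\,\chi'$. Once these two points are handled, the claim follows directly from the entry-wise computation and the Schur test.
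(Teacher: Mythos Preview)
Your proof is correct, and your construction of $\tilde\chi$ is essentially identical to the paper's (the paper writes $\tilde u$ for your $\psi$ and requires $\chi^{(1)},\ldots,\chi^{(p)}\prec\tilde u$, which is equivalent to your condition since all these supports lie in $\mathrm{supp}\,\chi'$). Where you diverge is in the remainder estimate.

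The paper bounds $\mathcal R$ by an analytic argument: it splits $1-\tilde u_{t,s}(|x|)\tilde u_{t,s}(|y|)$ into a piece containing $1-\tilde u_{t,s}(|x|)$ and a piece containing $1-\tilde u_{t,s}(|y|)$, then Taylor-expands one factor $\chi_{t,s}(|x|)-\chi_{t,s}(|y|)$ to order $p-k+1$ around the corresponding point, so that the condition $(1-\tilde u)\chi^{(\ell)}=0$ kills all subleading Taylor terms and only the order-$(p-k+1)$ remainder survives; combined with the trivial bound $|\chi_{t,s}(|x|)-\chi_{t,s}(|y|)|^{k-1}\le Cs^{-(k-1)}|x-y|^{k-1}$ this yields $|\mathcal R_{xy}|\le Cs^{-p}|x-y|^p|J_{xy}|$ and hence $\|\mathcal R\|\le Cs^{-p}\kappa_J^{(p)}$ by Schur. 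Your argument is instead purely geometric: you observe that $\mathcal R_{xy}\ne 0$ forces the segment $[\rho_x\wedge\rho_y,\rho_x\vee\rho_y]$ to meet $\mathrm{supp}\,\chi'$ while one endpoint lies outside $K$, so $|\rho_x-\rho_y|\ge\delta_0$ and hence $|x-y|\ge\delta_0 s$; the Markov-type inequality $\sum_{|x-y|\ge\delta_0 s}|J_{xy}|\le(\delta_0 s)^{-p}\kappa_J^{(p)}$ then finishes. Your route is shorter and avoids the Taylor machinery altogether for this lemma; the paper's route has the advantage of being methodologically uniform with the rest of the argument (Lemmas on $S$ and the main term rely essentially on the same Taylor expansions), and it produces the explicit pointwise bound $|\mathcal R_{xy}|\le Cs^{-p}|x-y|^p|J_{xy}|$ rather than just a support restriction.
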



\begin{proof}[Proof of Lemma \ref{lm:step1localize}]
Recall \eqref{eq:precdefn}. We choose $\tilde u\in C_c^\infty(\tfrac{1}{2},2)$ with $\tilde u\geq 0$, such that
\beq\label{eq:ucondn}
\chi^{(1)},\chi^{(2)},\ldots,\chi^{(p)}\prec \tilde u.
\eeq
Then we define $\tilde\chi\in\mathcal C_{1/2,1}$ by
\beq\label{eq:tildechidefn}
\tilde\chi(r) =\frac{1}{c}\int_{-\infty}^{r} 	\tilde u(r')^2\dd r',\qquad c=\int_\R \tilde u( \rho)^2\dd \rho.
\eeq
(Compare the proof of Lemma \ref{lem:antideriv}.) We have $\sqrt{c\tilde\chi'}=\tilde u$ and hence also $\sqrt{c\tilde\chi_{t,s}'}=\tilde u_{t,s}$. 

The matrix $\mathcal R$ can now be written as
\[
\mathcal R=i^k \mathrm{ad}^k_{\chi_{t,s}} (J)-
i^k \tilde u_{t,s}  \mathrm{ad}^k_{\chi_{t,s}} (J)\tilde u_{t,s}.
\]
We note that $\mathcal R$ is automatically Hermitian as the difference of two Hermitian matrices and so it suffices to prove the norm bound \eqref{eq:Rnormbound}. For this, we consider a fixed $(x,y)$-matrix element $\mathcal R_{xy}$ which by \eqref{eq:matrielt} reads
$$
\mathcal R_{xy}=i^k  (\chi_{t,s}(|x|)-\chi_{t,s}(|y|))^{k} (1-\tilde u_{t,s}(|x|)\tilde u_{t,s}(|y|)) J_{xy}.
$$
We decompose
$$
1-\tilde u_{t,s}(|x|)\tilde u_{t,s}(\y)=1-\tilde u_{t,s}(\x)+\tilde u_{t,s}(\x)(\tilde u_{t,s}(\x)-\tilde u_{t,s}(\y)).
$$

We first consider the term $1-\tilde u_{t,s}(\x)$ and employ a Taylor expansion of order $L=p-k+1$ to obtain

\begin{align*}
(1-\tilde u_{t,s}(\x))&(\chi_{t,s}(|x|)-\chi_{t,s}(|y|))^k\\
=& (1-\tilde u_{t,s}(\x))\left(\sum_{\ell=1}^{p-k+1} \mathcal T_\ell^{(p-k+1)}\right)
(\chi_{t,s}(|x|)-\chi_{t,s}(|y|))^{k-1}\\
=&(1-\tilde u_{t,s}(\x)) R_{p-k+1}
(\chi_{t,s}(|x|)-\chi_{t,s}(|y|))^{k-1}
\end{align*}
 where we used $(1-\tilde u)\chi^{(\ell)}=0$ for all $1\leq\ell\leq p$. 
 
 By \eqref{eq:mathcalTlbound}, we can bound the absolute value of this expression by
 $$
 C s^{-p}|x-y|^{-p}.
 $$
 Taylor expanding around the point $y$ instead yields the same bound, albeit with a potentially different constant $C$, on the second term $\tilde u_{t,s}(\x)(\tilde u_{t,s}(\x)-\tilde u_{t,s}(\y))$.
 
 By the Schur test and the fact that $\mathcal R$ is Hermitian, we obtain the norm bound
 \beq\label{R-bnd} \|\mathcal R\|\leq
 \sup_{x\in\Lam} \sum_{y\in\Lam} |R_{xy}|
 \leq Cs^{-p} \sup_x \sum_{y\in\Lam} |x-y|^{-p} |J_{xy}|
 =Cs^{-p} \kappa_J^{(p)}
 \eeq
and Lemma \ref{lm:step1localize} is proved.
\end{proof}

\textbf{Step 2: Bound on the iterated commutator.}
In this step, we prove 
\begin{lemma}\label{lm:IIest}
	There exist constants $c,C>0$ and a function $\tilde\chi\in\mathcal C_{1/2,1}$ such that for every $1\leq k\leq p$, the iterated commutators are bounded as
	\beq\label{eq:Bkformbound}
	\pm \tilde B_k \equiv \pm i^k\mathrm{ad}_{\astlo_{t,s}}^k( H)\leq s^{-k} c \tilde \astlo'_{t,s}
	+C s^{-p}.
	\eeq
where, recall, the operator $\tilde \astlo_{t,s}'$ is defined in Theorem \ref{thm:timederivmaintxt} and is given by (cf. \eqref{chits'}) 
	\beq\label{eq:tildenotation}
\begin{aligned}
\tilde \astlo'_{t,s}=&\frac{1}{N} \dd\Gam(\tilde\chi'_{t,s}),\qquad \tilde \chi'_{t,s}(|x|)=&\tilde\chi'\left(\frac{|x|-R_{\min}(X)-vt}{s}\right).
\end{aligned}
\eeq
\end{lemma}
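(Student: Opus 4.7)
The plan is to combine the symmetric sandwiching from Step 1 (Lemma \ref{lm:step1localize}) with a matrix-level operator inequality derived from the mean-value theorem and a weighted Schur estimate, then lift to Fock space via the monotonicity \eqref{eq:Gammon} of second quantization. First I write $\tilde B_k$ as the second quantization of its one-particle generator: by Lemma \ref{lem:HRf-com} and the fact that $N_\Lam$ commutes with both $H$ and $\dd\Gamma(\chi_{t,s})$, the iterated commutator $\tilde B_k = i^k\mathrm{ad}^k_{\astlo_{t,s}}(H)$ equals a scalar-in-$N_\Lam$ multiple of $\dd\Gamma\bigl(i^k\mathrm{ad}^k_{\chi_{t,s}}(J)\bigr)$. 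Applying Step 1 to the matrix inside gives the decomposition $i^k\mathrm{ad}^k_{\chi_{t,s}}(J) = M + \mathcal R$, where $M = c\sqrt{\tilde\chi'_{t,s}}\,[i^k\mathrm{ad}^k_{\chi_{t,s}}(J)]\,\sqrt{\tilde\chi'_{t,s}}$ and $\|\mathcal R\|\leq C\kappa_J^{(p)} s^{-p}$.

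Next I prove the one-particle matrix inequality
\begin{equation*}
\pm M \leq C s^{-k}\kappa_J^{(k)}\,\tilde\chi'_{t,s}
\end{equation*}
on $\ell^2(\Lam)$. Testing against $z\in\ell^2(\Lam)$ and applying the mean-value theorem $|\chi_{t,s}(|x|)-\chi_{t,s}(|y|)|\leq s^{-1}\|\chi'\|_\infty|x-y|$ bounds the inner product by $C s^{-k}\sum_{x,y}|x-y|^k|J_{xy}|\sqrt{\tilde\chi'_{t,s}(|x|)\tilde\chi'_{t,s}(|y|)}\,|z_x||z_y|$. The key step is the elementary inequality $\sqrt{ab}\,|z_x z_y|\leq\tfrac12(a|z_x|^2+b|z_y|^2)$ with $a=\tilde\chi'_{t,s}(|x|)$ and $b=\tilde\chi'_{t,s}(|y|)$: combined with the symmetry $|J_{xy}|=|J_{yx}|$, the two halves of the resulting sum coincide after swapping $x\leftrightarrow y$, yielding $\sum_x \tilde\chi'_{t,s}(|x|)|z_x|^2 \bigl(\sum_y|x-y|^k|J_{xy}|\bigr) \leq \kappa_J^{(k)}\langle z,\tilde\chi'_{t,s}z\rangle$. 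This is precisely where the hypothesis $\kappa_J^{(k)}<\infty$ enters.

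I then lift to Fock space. Monotonicity of $\dd\Gamma$ applied to $\pm M\leq Cs^{-k}\kappa_J^{(k)}\tilde\chi'_{t,s}$ gives $\pm\dd\Gamma(M)\leq Cs^{-k}\kappa_J^{(k)}\dd\Gamma(\tilde\chi'_{t,s})$, and after the $N$-normalization built into the definition of $\tilde\astlo'_{t,s}$ this reproduces the leading term $s^{-k}c\,\tilde\astlo'_{t,s}$. The remainder contribution satisfies $-\|\mathcal R\|\,I\leq\mathcal R\leq\|\mathcal R\|\,I$, so by monotonicity and $\dd\Gamma(I)=N_\Lam=N$ we obtain $\pm\tfrac1N\dd\Gamma(\mathcal R)\leq\|\mathcal R\|\leq C\kappa_J^{(p)}s^{-p}$, which is the desired $O(s^{-p})$ error. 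Adding the two contributions proves \eqref{eq:Bkformbound}.

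The main obstacle is the passage from a naive Cauchy--Schwarz bound, which would yield only the weaker diagonal estimate $\pm M\leq Cs^{-k}\sqrt{\tilde\chi'_{t,s}}$, to the sharp inequality featuring the full $\tilde\chi'_{t,s}$. This upgrade is made possible by the symmetric sandwiching $\sqrt{\tilde\chi'_{t,s}}(\,\cdot\,)\sqrt{\tilde\chi'_{t,s}}$ supplied by Step 1: only with a square-root factor on \emph{both} sides can the arithmetic-mean step absorb the two factors into a single $\tilde\chi'_{t,s}$. This clean output is essential for the iteration in the proof of Theorem \ref{thm:main}, since the right-hand side of \eqref{eq:DPhiest} must have the same $\tilde f'(\tilde\astlo_t)\tilde\astlo'_t$ shape up to an $s^{-1}$ gain in order for the bootstrap in \eqref{ineq-iter} to close.
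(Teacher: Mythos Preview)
Your proof is correct and follows the same overall architecture as the paper: reduce to a one-particle operator inequality via \eqref{eq:Gammon}, invoke Step~1 (Lemma~\ref{lm:step1localize}) for the symmetric sandwiching, control the sandwiched piece by $Cs^{-k}\tilde\chi'_{t,s}$, and absorb $\mathcal R$ into the $s^{-p}$ remainder.

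The one genuine difference lies in how you bound the sandwiched matrix $M$. The paper first proves the \emph{norm} bound $\|\mathrm{ad}^k_{\chi_{t,s}}(J)\|\leq Cs^{-k}$ by an iterated Taylor expansion designed to keep all polynomial weights $|x-y|^q$ in the range $k\leq q\leq p$, and then uses the trivial estimate $\pm\sqrt{\tilde\chi'_{t,s}}\,A\,\sqrt{\tilde\chi'_{t,s}}\leq \|A\|\,\tilde\chi'_{t,s}$. You instead bound the \emph{quadratic form} $\langle z,Mz\rangle$ directly: a single application of the mean-value theorem gives $|x-y|^k$, and then the arithmetic--geometric mean inequality together with the symmetry $|J_{xy}|=|J_{yx}|$ collapses the double sum to $\kappa_J^{(k)}\langle z,\tilde\chi'_{t,s}z\rangle$. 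Your route is more elementary and bypasses the iterated Taylor bookkeeping entirely; it works because the paper already records that $\kappa_J^{(k)}<\infty$ for all $1\leq k\leq p$ under the standing hypothesis $\kappa_J^{(p)}<\infty$. The paper's norm bound is a slightly stronger intermediate statement (operator norm rather than form bound), but for the purposes of Lemma~\ref{lm:IIest} your direct form estimate is exactly what is needed and is cleaner.
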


\begin{proof}[Proof of Lemma \ref{lm:IIest}]
We observe that it suffices to prove the operator inequalities 
\begin{equation}\label{eq:propchiclaim}
\pm i^k  \mathrm{ad}^k_{\chi_{t,s}} (J)\leq s^{-k} c \tilde \chi'_{t,s}+Cs^{-p}.
\end{equation}
Indeed, assuming \eqref{eq:propchiclaim}, the monotonicity and linearity of second quantization $\dd\Gamma(\cdot)$, see \eqref{eq:Gammon}, give
\begin{align*}
\pm \tilde B_k=\frac{1}{N}\dd\Gamma(\pm i^k  \mathrm{ad}^k_{\chi_{t,s}} (J))
&\leq s^{-k} \frac{c}{N} \dd\Gamma(\tilde \chi'_{t,s})+\frac{C}{N}s^{-p}\dd\Gamma(\mathbbm 1)\\
&\leq s^{-k}c \tilde \astlo'_{t,s}
+Cs^{-p}.
\end{align*} the last step used that $\dd\Gamma(\mathbbm 1)= N$.

We shall prove the following norm bound
\beq\label{eq:normboundremains}
\|\mathrm{ad}^k_{\chi_{t,s}} (J)\|\leq C  s^{-k}.
\eeq
This will imply the modified claim \eqref{eq:propchiclaim}. Indeed, together Lemma \ref{lm:step1localize} and \eqref{eq:normboundremains} give
\begin{align*}
\pm i^k  (\mathrm{ad}^k_{\chi_{t,s}} (J))
=&c \sqrt{\tilde\chi'_{t,s}} i^k  (\pm \mathrm{ad}^k_{\chi_{t,s}} (J)) \sqrt{\tilde\chi'_{t,s}}+\mathcal R\\
\leq& c\|\mathrm{ad}^k_{\chi_{t,s}} (J)\| \tilde \chi_{t,s}'+\|\mathcal R\|\\
\leq& c \tilde\chi_{t,s}'+C s^{-p} \kappa^{(p)}_J
\end{align*}
up to a change of the constant $c>0$.

We now prove \eqref{eq:normboundremains}. We shall use \eqref{eq:chitaylor} but need to be careful when expanding $(\chi_{t,s}(x)-\chi_{t,s}(y))^k$ because we can only control overall polynomial powers up to order $|x-y|^{p}$ through $\kappa_J^{(p)}$. Therefore, we iteratively expand only as far as necessary to get the desired error $s^{-p}$.

The iterative Taylor expansion reads
\begin{align*}
&(\chi_{t,s}(x)-\chi_{t,s}(y))^k=\sum_{\ell_1=1}^{p-k+1}\mathcal T^{(p-k+1)}_{\ell_1}
\sum_{\ell_2=1}^{p-k+2-\ell_1}\mathcal T^{p-k+2-\ell_1)}_{\ell_2}\\
&\times \sum_{\ell_3=1}^{p-k+3-\ell_1-\ell_2}\mathcal T^{(p-k+2-\ell_1-\ell_2)}_{\ell_3}
\ldots 
\sum_{\ell_k=1}^{p-\ell_1-\ldots-\ell_{k-1}}\mathcal T^{(p-\ell_1-\ldots-\ell_{k-1})}_{\ell_k},
\end{align*}
where $\mathcal T^{(L)}_\ell$ are defined after \eqref{eq:chitaylor} and with the usual convention that empty sums equal zero. As can be seen from the last term, the orders of the Taylor expansions are chosen so that any admissible tuple $(\ell_1,\ldots,\ell_k)$ satisfies $k\leq \ell_1+\ldots+\ell_k\leq p$.

The estimate \eqref{eq:mathcalTlbound} implies
$$
\mathcal T^{(p-k+1)}_{\ell_1}\mathcal T_{\ell_2}^{(p-k+2-\ell_1)}\ldots \mathcal T_{\ell_k}^{(p-\ell_1-\ldots-\ell_{k-1})}
\leq C s^{-\ell_1-\ldots-\ell_k}|x-y|^{\ell_1+\ldots+\ell_k}.
$$
By the Schur test,
\begin{align*}
\|\mathrm{ad}^k_{\chi_{t,s}} (J)\|
\leq& 
C \sup_{x\in\Lam} \sum_{y\in\Lam}|J_{xy}|
\sum_{\ell_1, \ldots, \ell_k}  s^{-\ell_1-\ldots-\ell_k}|x-y|^{\ell_1+\ldots+\ell_k}\\
\leq& C \sum_{p=k}^{p} s^{-p} \kappa_J^{(p)}
\leq C s^{-k}, 
\end{align*}
where $\sum_{\ell_1, \ldots, \ell_k}=\sum_{\ell_1=1}^{p-k}
\sum_{\ell_2=1}^{p-k+1-\ell_1}\sum_{\ell_3=1}^{p-k+2-\ell_1-\ell_2}
\ldots 
\sum_{\ell_k=1}^{p+1-\ell_1-\ldots-\ell_{k-1}}$, which yields \eqref{eq:normboundremains} and hence Lemma \ref{lm:IIest}.
\end{proof}

\textbf{Step 3: Addressing asymmetry and concluding Proposition \ref{prop:IIest}}
 
While Lemma \ref{lm:IIest} goes in the right direction, it is not so obvious how to use it to obtain an operator inequality for $S$ because in \eqref{S}, $B_k$ does not appear in the symmetric form $C^\dagger B_k C$. 

In our specific situation, the asymmetry can be addressed by combining the following two technical observations. 

\begin{itemize}
\item[(i)] Any operator inequality $A\leq B$ with $B> 0$ can be rephrased as the norm bound $\|B^{-1/2}AB^{-1/2}\|\leq 1$ and in our situation the target observable $\tilde \astlo_{t,s}'=N^{-1}\dd\Gamma(\tilde\chi'_{t,s})\geq 0$ is positive semidefinite and can thus be made positive definite by a limiting procedure.
\item[(ii)]
The target observable $\tilde \astlo_{t,s}'$ commutes with the source of the asymmetry, $g_k=g_k(\astlo_{t,s})$ and the latter is uniformly bounded by the functional calculus, $\|g_k(\astlo_{t,s})\|\leq \|g_k\|_\infty\leq c$. 
 \end{itemize}
 The details are as follows.

\begin{proof}[Proof of Proposition \ref{prop:IIest}]
Recall \eqref{eq:precdefn}. Fix $1\leq k\leq p$ and find $v\in C_c^\infty((\eta, \xi))$ with $v\geq0$ so that
\[
\left\{u_k^{(j)}\right\}_{\substack{2\leq k\leq p-1\\
0\leq j\leq p-1}
}\prec v.
\]

We claim that
\beq\label{eq:IIclaim}
S\leq c\tilde \astlo'_{t,s}+Cs^{-p}.
\eeq

This will be sufficient to conclude the lemma. Indeed, it implies
\begin{align*}
S
=v(\astlo_{t,s})S v(\astlo_{t,s})
&\leq v(\astlo_{t,s})(c\tilde \astlo'_{t,s}+Cs^{-p})v(\astlo_{t,s})\\
&\leq c\ h(\astlo_{t,s}) \tilde \astlo'_{t,s}+Cs^{-p}.
\end{align*}
where we defined the admissible function $h=v^2\in\mathcal{A}_{\eta,\xi}$.

It remains to prove the claim \eqref{eq:IIclaim}. A generic term contributing to $S$ is of the form
\begin{align*}
\re\big( w_1(\astlo_{t,s}) B_k w_2(\astlo_{t,s})\big), 
\end{align*}
where $w_1,w_2$ are real-valued functions. (For example,  $w_1=u_k$ and and $w_2=g_k u_k$ gives $u_k (B_kg_k+g_kB_k^\dagger)u_k$.) In the following, we shall again suppress the argument $\astlo_{t,s}$ from the notation.

Let $\varepsilon>0$. We claim that 
\beq\label{eq:Bgclaimeps}
\re\big(w_1 B_k w_2\big) 
\leq \frac12 cs^{-k}(\tilde \astlo_k'+\varepsilon)+Cs^{-p}
\eeq
with the constant $C>0$ as in Lemma \ref{lm:IIest} and $c>0$ to be determined. This implies \eqref{eq:IIclaim} by sending $\eps\to0$.

We would like to derive \eqref{eq:Bgclaimeps} via Lemma \ref{lm:IIest}. As mentioned before, the main challenge is to address the asymmetry due to $w_1\neq w_2$. 

We will derive \eqref{eq:Bgclaimeps} from
\beq\label{eq:Bgclaimepsw1w2}
\re\big(w_1  B_kw_2\big) 
\leq \frac12 cs^{-k}(\tilde \astlo_k'+\varepsilon)+ Cs^{-p}w_1w_2
\eeq
by using that $\|w_1(\astlo_{t,s})\|\|w_2(\astlo_{t,s})\|\leq \|w_1\|_\infty\|w_2\|_\infty\leq c$ thanks to the functional calculus.

Since $\tilde \astlo_k'+\varepsilon> 0$ and $w_1w_2=w_2w_1$, the claim \eqref{eq:Bgclaimepsw1w2}  is equivalent to the norm bound
\beq\label{w1Bkw2-est}
\left\|D
\re\big(w_1( B_k-Cs^{-p} )w_2\big) 
D
\right\|\leq 2cs^{-k},
\eeq
where $D=\frac{1}{\sqrt{\tilde \astlo_{t,s}'+\varepsilon}}$.
To estimate the left-hand side we use the commutativity
\beq\label{eq:commutator0}
\left[D, w_j(\astlo_{t,s})\right]=0,\qquad j=1,2,
\eeq
by the functional calculus and $[\tilde \astlo_{t,s}',\astlo_{t,s}]=N^{-1}\dd\Gamma([\tilde\chi'_{t,s},\chi_{t,s}])=0$. This allows us to pull out the norms of $w_1$ and $w_2$. Using this,  the estimate $\|w_1(\astlo_{t,s})\|\|w_2(\astlo_{t,s})\|\leq \|w_1\|_\infty\|w_2\|_\infty\leq c$ and the relation $\|A^\dagger\|=\|A\|$, we obtain
\begin{align*}
&\big\| D\re\big(w_1( B_k-Cs^{-p})w_2\big) D\big\|\\
\leq& \big\|D( B_k-Cs^{-p}) D\big\|\|w_1(\astlo_{t,s})\|\|w_2(\astlo_{t,s})\|\\
\leq& c\big\|D( B_k-Cs^{-p})  D\big\|.
\end{align*}
Since $k\le p$,  the triangle inequality and Lemma \ref{lm:IIest} give, up to changing the constant $c$,
the inequality
$$
\left\|D(B_k-Cs^{-p})  D\right\|
\leq cs^{-k}
$$
which proves \eqref{w1Bkw2-est} and therefore \eqref{eq:Bgclaimepsw1w2}. 
 \end{proof}

%

\subsection{Estimating the main term $iu_1 B_1 u_1$}
We can estimate the leading term term $iu_1 B_1 u_1$ in a more refined way compared to $S$ by using that the first derivative has a sign, $\chi'\geq 0$. This fact allows to reproduce $
\astlo'_{t,s}$ exactly at lowest order (in favor of the $\tilde \astlo'_{t,s}$ that appeared above for higher orders) as asserted in Theorem \ref{thm:timederivmaintxt} . 

\begin{lemma}\label{lm:Iest}
Let $\tilde\chi\in\mathcal C_{1/2,1}$ be given by Lemma \ref{lm:step1localize}. There exists a constant $C>0$ such that 
\begin{align}
\mathrm{(I)}\leq \kappa_J^{(1)}  s^{-1}  \astlo'_{t,s}+Cs^{-2}\tilde \astlo'_{t,s}+Cs^{-p},\qquad p\geq 3.
\end{align}
\end{lemma}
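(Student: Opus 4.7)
The plan is to reduce $\mathrm{(I)}=u_1B_1u_1$ to a matrix-level operator inequality on $\ell^2(\Lambda)$ and then lift it to Fock space. By Lemma~\ref{lem:HRf-com} together with the fact that $H-T$ commutes with $\astlo_{t,s}$ (both being built from the mutually commuting number operators $\{n_x\}$), one has $B_1 = (1/N)\dd\Gam(M)$ with $M := i[\chi_{t,s}(|x|),\,J]$ a Hermitian matrix on $\ell^2(\Lambda)$. The target matrix-level bound is
\[
M \leq \kappa_J^{(1)} s^{-1}(\chi')_{t,s} + Cs^{-2}\tilde\chi'_{t,s} + Cs^{-p}\mathbbm 1.
\]
Given this, monotonicity and linearity of $\dd\Gam$ (equation \eqref{eq:Gammon}) together with $\dd\Gam(\mathbbm 1)=N$ yield $B_1\leq \kappa_J^{(1)}s^{-1}\astlo'_{t,s}+Cs^{-2}\tilde\astlo'_{t,s}+Cs^{-p}$. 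Sandwiching by $u_1=\sqrt{f'(\astlo_{t,s})}$, which commutes with the right-hand side since $\astlo_{t,s}$, $\astlo'_{t,s}$ and $\tilde\astlo'_{t,s}$ are joint functions of the $\{n_x\}$, then delivers the claim.

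The matrix bound above is proved by separating leading and subleading contributions. For the \emph{leading} piece, I employ the symmetric Taylor expansion around the midpoint of the interval $[|y|,|x|]$, in which the even-order terms cancel by symmetry:
\[
\chi_{t,s}(|x|)-\chi_{t,s}(|y|) \;=\; \tfrac{s^{-1}}{2}\bigl((\chi')_{t,s}(|x|)+(\chi')_{t,s}(|y|)\bigr)(|x|-|y|) + O(s^{-3}|x-y|^3).
\]
Multiplying by $iJ_{xy}$ identifies the leading matrix as $\tfrac{s^{-1}}{2}\{(\chi')_{t,s},\,i\,\mathrm{ad}_{|x|}(J)\}$. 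The algebraic identity $\tfrac12\{A,B\}=\sqrt{A}\,B\,\sqrt{A}+\tfrac12[\sqrt{A},[\sqrt{A},B]]$ (valid for $A\geq 0$ and $B$ Hermitian; applicable since $\sqrt{\chi'}\in C^\infty$ by $\chi\in\mathcal C_{1/2,1}$), combined with the sharp Schur-test bound $\|i\,\mathrm{ad}_{|x|}(J)\|=\|[|x|,J]\|\leq \kappa_J^{(1)}$ from equation~\eqref{k}, yields the principal operator inequality $\pm\sqrt{(\chi')_{t,s}}\,i\,\mathrm{ad}_{|x|}(J)\,\sqrt{(\chi')_{t,s}}\leq \kappa_J^{(1)}(\chi')_{t,s}$, contributing the sharp main term $\kappa_J^{(1)}s^{-1}(\chi')_{t,s}$.

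For the \emph{subleading} terms, the Taylor remainder (extended to order $p$ via Lemma~\ref{lem:commut-exp}) and the double commutator $[\sqrt{(\chi')_{t,s}},[\sqrt{(\chi')_{t,s}},i\,\mathrm{ad}_{|x|}(J)]]$ each acquire an extra factor $s^{-1}|x-y|$ per $\sqrt{(\chi')_{t,s}}$-commutator (by the mean-value theorem on the scaled argument), so their operator norms are bounded by $Cs^{-2}\kappa_J^{(3)}+\cdots+Cs^{-p}\kappa_J^{(p)}$. To promote these norm bounds to operator inequalities dominated by the auxiliary cutoff $\tilde\chi'_{t,s}$, I apply the Lemma~\ref{lm:step1localize}-style symmetric sandwiching by $\sqrt{\tilde\chi'_{t,s}}$, yielding the $Cs^{-2}\tilde\chi'_{t,s}+Cs^{-p}\mathbbm 1$ contributions and completing the matrix bound.

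The main obstacle is obtaining the \emph{sharp} constant $\kappa_J^{(1)}=v_{\max}$ in the leading term, as opposed to the generic $c$ that sufficed in Lemma~\ref{lm:IIest} for $k=1$. This demands both (i) the symmetric Taylor expansion to produce the anticommutator $\{(\chi')_{t,s},\,i\,\mathrm{ad}_{|x|}(J)\}$, and (ii) the $\sqrt{A}B\sqrt{A}$ decomposition so that the principal operator inequality is dominated by $(\chi')_{t,s}$ itself (tied directly to $\astlo'_{t,s}$) rather than the broader $\tilde\chi'_{t,s}$. Carefully absorbing the double-commutator correction generated by (ii) into the $Cs^{-2}\tilde\chi'_{t,s}$ subleading term without losing sharpness is the principal technical point.
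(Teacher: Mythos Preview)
Your argument is correct and reaches the same conclusion, but it proceeds by a genuinely different mechanism from the paper's. The paper first invokes Lemma~\ref{lm:step1localize} to sandwich $i\,\mathrm{ad}_{\chi_{t,s}}(J)$ by $\sqrt{\tilde\chi'_{t,s}}$ (producing the $Cs^{-p}$ tail), and then reduces everything to the scalar-type inequality $i\,\mathrm{ad}_{\chi_{t,s}}(J)\le \kappa_J^{(1)}s^{-1}\chi'_{t,s}+Cs^{-2}$, which it proves by the Schur test together with the elementwise trick $\chi'_{t,s}(|x|)\le \sqrt{\chi'_{t,s}(|x|)\chi'_{t,s}(|y|)}+Cs^{-1}|x-y|$ (this is where $\sqrt{\chi'}\in C^\infty$ enters for the paper). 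You instead extract the sharp constant algebraically: a symmetric midpoint Taylor expansion turns the leading piece into the anticommutator $\tfrac12\{(\chi')_{t,s},\,i\,\mathrm{ad}_{|x|}(J)\}$, the identity $\tfrac12\{A,B\}=\sqrt{A}B\sqrt{A}+\tfrac12[\sqrt{A},[\sqrt{A},B]]$ isolates $\sqrt{(\chi')_{t,s}}\,i\,\mathrm{ad}_{|x|}(J)\,\sqrt{(\chi')_{t,s}}\le \kappa_J^{(1)}(\chi')_{t,s}$, and the Lemma~\ref{lm:step1localize} sandwich is applied only to the subleading remnants. Your route makes the origin of the exact constant $\kappa_J^{(1)}$ more transparent (it is literally $\|i\,\mathrm{ad}_{|x|}(J)\|$), at the cost of having to verify that the Lemma~\ref{lm:step1localize} localization argument adapts to the double commutator $[\sqrt{(\chi')_{t,s}},[\sqrt{(\chi')_{t,s}},i\,\mathrm{ad}_{|x|}(J)]]$ and to the higher Taylor remainders; it does (the support conditions $\chi^{(\ell)}\prec\tilde u$ still kill the non-remainder terms), but this should be spelled out rather than asserted. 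Two small slips: your appeal to Lemma~\ref{lem:commut-exp} for extending the scalar Taylor remainder is a misattribution (that lemma concerns operator commutators; you just want the ordinary Taylor formula to higher order), and the $u_1$-sandwich at the end produces $f'(\astlo_{t,s})\astlo'_{t,s}$ rather than the bare $\astlo'_{t,s}$ in the displayed inequality---but this is exactly what is used downstream in the proof of Theorem~\ref{thm:timederivmaintxt}, and the paper itself has the same discrepancy between statement and use.
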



\begin{proof}[Proof of Lemma \ref{lm:Iest}]
Since $u_1=u_1^\dagger$ appears symmetrically and $\dd\Gamma(\cdot)$ is monotonic, it suffices to prove the operator inequality
\beq\label{eq:sufficesop1}
i\mathrm{ad}_{\chi_{t,s}}(J)\leq \kappa_J^{(1)}  s^{-1}\chi'_{t,s}+Cs^{-2}\tilde \chi'_{t,s}+Cs^{-p}
\eeq
By applying Lemma \ref{lm:step1localize} with $k=1$, there exist constants $c,C>0$ and $\tilde\chi\in \mathcal C_{1/2,1}$ such that
\beq\label{eq:Ilocalize}
i\mathrm{ad}_{\chi_{t,s}}(J)
=c \sqrt{\tilde\chi'_{t,s}} i  \mathrm{ad}_{\chi_{t,s}} (J) \sqrt{\tilde\chi'_{t,s}}
+\mathcal R
\eeq
where the remainder $\mathcal R$ is Hermitian with norm controlled by $\|\mathcal R\|\leq Cs^{-p}$ (see \eqref{R-bnd}) and can thus be ignored in the following. Moreover, the construction in the proof of Lemma \ref{lm:step1localize} satisfies the relation
$$
c\tilde\chi'_{t,s}=1,\qquad \textnormal{on } \mathrm{supp}\, \chi,
$$
as can be seen from \eqref{eq:ucondn} and \eqref{eq:tildechidefn}.

Combining this with \eqref{eq:Ilocalize}, we see that \eqref{eq:sufficesop1} is implied by the operator inequality
\beq\label{eq:Iestclaimrephrase}
i\mathrm{ad}_{\chi_{t,s}}(J)\leq \kappa_J^{(1)} s^{-1}\chi'_{t,s}+Cs^{-2}
\eeq

Similarly to Step 3 in the proof of Proposition \ref{prop:IIest}, we rephrase the claimed operator inequality \eqref{eq:Iestclaimrephrase} as the following norm bound,\beq\label{eq:Iestrephrasenorm}
\left\|
M'i\mathrm{ad}_{\chi_{t,s}}(J)
M'
\right\|\leq 1,\ \text{ where }\  M'=1\big/\sqrt{\kappa_J^{(1)} s^{-1}\chi'_{t,s}+Cs^{-2} }.\eeq

We shall prove \eqref{eq:Iestrephrasenorm} via the Schur test. We first consider the matrix elements 
$$
|i  (\mathrm{ad}_{\chi_{t,s}} (J))_{xy}|
=
|\chi_{t,s}(|x|)-\chi_{t,s}(|y|)||J_{xy}|.
$$
We consider a mixture of the Taylor expansions around $x$ and around $y$. This can in fact be extended to any order; see \cite[Lemma 2.2]{FLS}. 

Without loss of generality, assume $|x|\geq |y|$. By monotonicity, we have $\chi_{t,s}(|x|)\geq \chi_{t,s}(|y|)$ and by \eqref{eq:chitaylor} 
\begin{align*}
|\chi_{t,s}(|x|)-\chi_{t,s}(|y|)|
=\chi_{t,s}(|x|)&-\chi_{t,s}(|y|)
=\chi'_{t,s}(|x|) \frac{|x|-|y|}{s}+R_2^{(2)}\\
&\leq \chi'_{t,s}(|x|) \frac{|x-y|}{s}+C s^{-2}|x-y|^2.
\end{align*}
Since $\chi\in \mathcal C_{1/2,1}$, we have $u=\sqrt{\chi'}\in C^\infty(\R)$ and so
$$
\chi'_{t,s}(|x|)\leq 
u_{t,s}(|x|)u_{t,s}(|y|)+Cs^{-1}|x-y|
$$
We have shown that for $|x|\geq |y|$,
$$
|\chi_{t,s}(|x|)-\chi_{t,s}(|y|)|\leq s^{-1}\sqrt{\chi'_{t,s}(|x|)\chi'_{t,s}(|y|)}|x-y|+Cs^{-2}|x-y|^2.
$$
The same estimate holds if $|x|\geq |y|$ by interchanging the roles of $x$ and $y$ in the above argument.

Hence, we can bound the matrix elements appearing in \eqref{eq:Iestrephrasenorm} by
\begin{align*}
&\left|\left(M'i\mathrm{ad}_{\chi_{t,s}}(J)
M'\right)_{xy}\right|\\
\leq&
M'\left(\sqrt{\chi'_{t,s}(|x|)\chi'_{t,s}(|y|)}\frac{|x-y|}{s}
+C s^{-2} |x-y|^3\right) 
|J_{xy}| M'\\
\leq&|J_{xy}|\frac{|x-y|}{c}+|J_{xy}||x-y|^3.
\end{align*}
 Applying the Schur test and recalling the Definition \eqref{kaJp-def} of $\kappa_J^{(p)}$ proves \eqref{eq:Iestrephrasenorm} and thus Lemma \ref{lm:Iest}.
\end{proof}

\subsection{Conclusion of the proof of Theorem \ref{thm:timederivmaintxt} }
\begin{proof} 

\DETAILS{We consider the symmetrized expansion \eqref{eq:commI+II} and 
using $\mathrm{(III)}+\mathrm{(III)}^\dagger \leq \|\mathrm{(III)}+\mathrm{(III)}^\dagger\|\leq \|\mathrm{(III)}\|+\|\mathrm{(III)}^\dagger\|=2\|\mathrm{(III)}\|$ and the fact that  (I) is Hermitian, we obtain
\beq\label{eq:thmtd}
\begin{aligned}
i[H,\Phi_s(t)]=&\frac{i[H,\Phi_s(t)]+(i[H,\Phi_s(t)])^\dagger}{2}\\
\leq& (I)+\frac{\mathrm{(II)}+\mathrm{(II)}^\dagger}{2}
+ \|\mathrm{(III)}\| 
\end{aligned}
\eeq
By combining \eqref{Heis-der}, \eqref{dt-Phi}, and \eqref{eq:thmtd}, we have
$$
D\Phi_s(t)\leq - \frac{v'}{s} f'(\astlo_{t,s})\astlo'_{t,s}+\mathrm{(I)}+\frac{\mathrm{(II)}
+\mathrm{(II)}^\dagger}{2}+\|\mathrm{(III)}\|
$$}
We apply estimates of $R, S$ and $u_1iB_1u_1$ given in Lemma \ref{lm:IIIest}, Propositions \ref{prop:IIest} and Lemma \ref{lm:Iest} to the r.h.s.\ of  the expansion \eqref{HPhi-comm} to obtain
\beq
D\Phi_s(t)\leq \frac{\kappa_J^{(1)}-v'}{s} f'(\astlo_{t,s})\astlo'_{t,s}
+Cs^{-2} h(\astlo_{t,s})\tilde \astlo_{t,s}'+Cs^{-M}
\eeq 
 By Lemma \ref{lem:antideriv}, there exists $\tilde f\in \mathcal C_{\eta,\xi}$ such that
$$
h=C \tilde f'.
$$
Recall \eqref{eq:precdefn}. We can find $\hat\chi\in \mathcal C_{\eta,\xi}$ satisfying
$$
\chi,\tilde\chi\prec \hat\chi
$$
and so, by monotonicity of $\tilde f$,
$$
C\tilde f'(\astlo_{t,s}) \tilde \astlo'_{t,s}\leq C\tilde f'(\hat \astlo_{t,s}) \hat \astlo'_{t,s}
$$
Finally, we rename $\hat\chi$ as $\tilde\chi$ again to avoid confusion with the Fourier transform. This proves Theorem \ref{thm:timederivmaintxt} .
\end{proof}

\section{Remark on parameter dependencies}

It is in principle possible to obtain the dependence of implicit constants on the parameters $\eta,\xi$ in Theorems \ref{thm:main} and \ref{thm:timederivmaintxt}. This could be used, to widen the scope of our result to situations of mesoscopic particle transport, i.e., propagation of a total of $\propto N^{\delta}$ particles with $0<\delta<1$. 

For this, one simply takes $\eta=\eta_0 N^{\delta-1}$ and $\xi=\xi_0 N^{\delta-1}$. In that case,  the distance $d_{XY}$ in \eqref{max-vel-estHub'} will be accompanied by a factor $N^{\delta-1}$. This means that the final estimate is useful on sufficiently large scales compared to the total particle number.

While we do not track the precise dependence of the constants $\eta,\xi$ for the sake of simplicity, we explain here how this can be done in principle. The key observation is that $\xi,\eta$ can be removed from the function classes $\mathcal A_{\eta,\xi}$ and $\mathcal C_{\eta,\xi}$ by an affine change of variables 
$$
a (r)=(2\xi-2\eta)r+2\eta-\xi
$$
which sends $[\tfrac{1}{2},1]\to [\eta,\xi]$. We have
$$
\begin{aligned}
\mathcal A_{\eta,\xi}=&
\left\{
h\in C^\infty(\R)\, :\, h(r)=h_1(a(r)) \textnormal{ with } h_1\in \mathcal A_{\tfrac{1}{2},1}\right\},\\
\mathcal C_{\eta,\xi}=&
\left\{
f\in C^\infty(\R)\, :\, f(r)=f_1(a(r)) \textnormal{ with } f_1\in \mathcal C_{\tfrac{1}{2},1}
\right\}
\end{aligned}
$$
In particular, for $h\in \mathcal A_{\eta,\xi}$ it holds that
$$
|h^{(k)}(r)|=2^k(\xi-\eta)^k |h_1^{(k)}(a(r))|
$$
where $h_1\in \mathcal A_{\tfrac{1}{2},1}$ does not explicitly depend on $\eta,\xi$. An analogous statement holds for $f\in\mathcal C_{\eta,\xi}$. We see that each derivative is naturally accompanied by a factor $\xi-\eta$ in addition to the factor $s^{-1}$ that arose for each derivative taken in the proof of Theorem \ref{thm:timederivmaintxt} given in the preceding sections.


\begin{thebibliography}{10}

\bibitem{LR}
	E.H.~Lieb, and D.W.~Robinson, \textit{The finite group velocity of quantum spin systems}, In Statistical mechanics, 425-431. Springer, Berlin, 1972
	
		\bibitem{H1}
M.B.~Hastings, \textit{An area law for one-dimensional quantum systems}, J.\ Stat.\ Mech.: Theor.\ Exper.\ \textbf{2007} (2007), P08024

	\bibitem{LVV} Z.~Landau, U.~Vazirani, and T.~Vidick, \textit{A polynomial time algorithm for the ground state of one-dimensional gapped local Hamiltonians}, Nature Physics \textbf{11} (2015), no.\ 7, 566-569
	
	
	\bibitem{BdRF}
	S.~Bachmann, W.~De Roeck, and M.~Fraas, \textit{Adiabatic
Theorem for Quantum Spin Systems}, Phys.\ Rev.\ Lett.\ \textbf{119}, 060201 (2017).
	
	\bibitem{BMNS}
S.~Bachmann, S.~Michalakis, B.~Nachtergaele, and R.~Sims, \emph{{Automorphic equivalence within gapped phases of quantum lattice systems}}, Comm.\ Math.\ Phys.\ \textbf{309} (2012), no.\ 3, 835 -- 871
	
	\bibitem{BHM}
	S.~Bravyi, M.B.~Hastings, \textit{Topological quantum order: Stability under local perturbations}, J.\ Math.\ Phys.\ \textbf{51} (2010), 093512
	
	\bibitem{BHV} S.~Bravyi, M.B.~Hastings, and F.~Verstraete, \textit{Lieb-Robinson bounds and the generation of correlations and topological quantum order}, Phys.\ Rev.\ Lett.\ \textbf{97} (2006), no.\ 5 (2006), 050401
	
	\bibitem{H2}
M.B.~Hastings, \emph{{Lieb-Schultz-Mattis in higher dimensions}} Phys.\ Rev.\ B \textbf{69} (2004),104431	

	\bibitem{NS_ls} B.~Nachtergaele and R.~Sims, \textit{A Multi-Dimensional
Lieb-Schultz-Mattis Theorem}, Comm.\ Math.\ Phys.\ \textbf{276} (2007), 437
	
	
		\bibitem{CL}
	C.-F.~Chen and A.~Lucas, \textit{Finite Speed of Quantum
Scrambling with Long Range Interactions,}, Phys.\ Rev.\
Lett.\ \textbf{123} (2019), 250605

	\bibitem{KS_he}
	T. Kuwahara and K. Saito, \textit{Absence of Fast Scrambling
in Thermodynamically Stable Long-Range Interacting
Systems}, Phys.\ Rev.\ Lett.\ \textbf{126} (2021), 030604

	\bibitem{RS}
	D.A.~Roberts and B.~Swingle, \textit{Lieb-Robinson Bound
and the Butterfly Effect in Quantum Field Theories},
Phys.\ Rev.\ Lett.\ \textbf{117}, 091602 (2016)


	\bibitem{DLLY1} D.~Damanik, M.~Lemm, M.~Lukic, and W.~Yessen, \textit{New Anomalous Lieb-Robinson Bounds in Quasiperiodic $XY$ Chains} Phys.\ Rev.\ Lett.\ \textbf{113} (2014), no.\ 12, 127202
	
	\bibitem{DLLY2} 
	D.~Damanik, M.~Lemm, M.~Lukic, and W.~Yessen, \textit{On anomalous Lieb–Robinson bounds for the Fibonacci $XY$ chain}, J.\ Spectr.\ Theory \textbf{6} (2016), no.\ 3, 601-628
	
		\bibitem{EMNY} D. V. Else, F. Machado, C.~Nayak, and N. Y. Yao:  \textit{Improved Lieb-Robinson bound for many-body Hamiltonians with power-law interactions}
	Phys. Rev. A 101, 022333, 2020
	
	
\bibitem{Fossetal} M.~Foss-Feig, Z.-X.~Gong, C.W.~Clark, and A.V.~Gorshkov, \textit{Nearly-linear light cones in long-range interacting quantum systems} Phys.\ Rev.\ Lett.\ \textbf{114} (2015), 157201 
	
	\bibitem{GL}
	M.~Gebert, and M.~Lemm, \textit{On polynomial Lieb–Robinson bounds for the $XY$ chain in a decaying random field}, J.\ Stat.\ Phys.\ \textbf{164} (2016), no.\ 3, 667-679.
	
	\bibitem{GNRS}
	M. Gebert, B. Nachtergaele, J. Reschke, R. Sims,
	\textit{Lieb-Robinson bounds and strongly continuous dynamics for a class of many-body fermion systems in $\R^d$},
	Ann.\ Henri Poincar\'e \textbf{21} (2020), 3609-3637 
	
		\bibitem{HSS}
	E.~Hamza, R.~Sims, and G.~Stolz, \textit{Dynamical localization in disordered quantum spin systems}, Comm.\ Math.\ Phys.\ \textbf{315} (2012), no.\ 1, 215-239

		\bibitem{MKN}T. Matsuta, T. Koma and S. Nakamura, \textit{Improving the Lieb-Robinson Bound for Long-Range Interactions},
	Anna.\ H.\ Poincar\'e \textbf{18} (2017), 519-528
	
		\bibitem{NRSS}
	B.~Nachtergaele, H.~Raz, B.~Schlein, and R.~Sims, \textit{Lieb-Robinson Bounds for Harmonic and Anharmonic Lattice Systems}, Comm.\ Math.\ Phys.\ \textbf{286} (2009), no.\ 3, 1073-1098
	
		
	\bibitem{NS1} B.~Nachtergaele and R.~Sims, \textit{Lieb-Robinson bounds and the exponential clustering theorem}, Comm.\ Math.\ Phys.\ \textbf{265} (2006), no.\ 1, 119-130
	
	\bibitem{NSY1}
	 B.~Nachtergaele, R.~Sims, and A.~Young, \textit{Lieb–Robinson bounds, the spectral flow, and stability of the spectral gap for lattice fermion systems}, Mathematical Problems in Quantum Physics \textbf{717} (2018)


\bibitem{Tranetal}
		M.C.~Tran, A.Y.~Guo, C.L.~Baldwin, A.~Ehrenberg, A.V.~Gorshkov, and A.~Lucas, \textit{The Lieb-Robinson light cone for power-law interactions}, arXiv preprint arXiv:2103.15828 

	
		\bibitem{KGE} M.~Kliesch, C.~Gogolin, and J.~Eisert, \textit{Lieb-Robinson bounds and the simulation of time-evolution of local observables in lattice systems}, In Many-Electron Approaches in Physics, Chemistry and Mathematics, 301-318. Springer, 2014

	\bibitem{NS2} B. Nachtergaele and R. Sims, \textit{Much ado about something
	why Lieb-Robinson bounds are useful}, Institut Mittag-Leffler, Report no.\ 26 (2010/2011)
	
		\bibitem{NSY2}
	 B.~Nachtergaele, R.~Sims, and A.~Young, \textit{Quasi-locality bounds for quantum lattice systems. I. Lieb-Robinson bounds, quasi-local maps, and spectral flow automorphisms} J.\ Math.\ Phys.\ \textbf{60}, no.\ 6, 061101

	
		\bibitem{BDZ}
I.~Bloch, J.~Dalibard, and W.~Zwerger, \textit{Many-body
physics with ultracold gases}, Rev.\ Mod.\ Phys.\ \textbf{80} (2008), 885

	\bibitem{exp1}
	M.~Cheneau, P.~Barmettler, D.~Poletti, M.~Endres,
P.~Schauß, T.~Fukuhara, C.~Gross, I.~Bloch, C.~Kollath,
and S.~Kuhr,  \textit{Light-cone-like spreading of correlations in a quantum many-body system}, Nature \textbf{481}, (2012), 484-487 
	
	\bibitem{exp2}
	Y.~Takasu, T.~Yagami, H.~Asaka, Y.~Fukushima, K.~Nagao, S.~Goto, I.~Danshita, and Y.~Takahashi, \textit{Energy
redistribution and spatiotemporal evolution of correlations after a sudden quench of the Bose-Hubbard model}, Science Advances \textbf{6}  (2020)

\bibitem{nexp1}
A.~M.~L\"auchli and C.~Kollath, \textit{Spreading of correlations
and entanglement after a quench in the one-dimensional
Bose–Hubbard model}, J.\ Stat.\ Mech.: Theor.\ Exper. \textbf{2008} (2008), P05018

\bibitem{nexp2}
G.~Carleo, F.~Becca, L.~Sanchez-Palencia, S.~Sorella,
and M.~Fabrizio, \textit{Light-cone effect and supersonic correlations in one- and two-dimensional bosonic superfluids},
Phys.\ Rev.\ A \textbf{89} (2014), 031602

	\bibitem{EG}
	J~Eisert and D.~Gross, \textit{Supersonic quantum communication}, Phys.\ Rev.\ Lett.\ \textbf{102} (2009), no.\ 24, 240501
	
\bibitem{SHOE}
 N.~Schuch, S.K.~Harrison, T.J.~Osborne, and J.~Eisert,
\textit{Information propagation for interacting-particle systems}, Phys.\ Rev.\ A \textbf{84}(2011), 032309 

\bibitem{WH}
		Z.~Wang and K.R.~Hazzard, \textit{Tightening the Lieb-Robinson Bound in Locally Interacting Systems}, PRX
Quantum \textbf{1} (2020), 010303
	
		\bibitem{KS} T.~Kuwahara and K.~Saito, \textit{Lieb-Robinson bound and almost-linear light-cone in interacting boson systems}, Phys.\ Rev.\ Lett.\ \textbf{127} (2021),  070403
	
		\bibitem{YL}
		C.~Yin and A.~Lucas, \textit{Finite speed of quantum information in models of interacting bosons at finite density}, arXiv preprint, arXiv:2106.09726

		\bibitem{FLS}
	J.~Faupin, M.~Lemm, I.M.~Sigal, \textit{On Lieb-Robinson for the Bose-Hubbard model}, arXiv preprint, arXiv:2109.04103		
	



	
	
	

	
		\bibitem{APSS} J. Arbunich, F. Pusateri, I.M. Sigal, A. Soffer, \textit{Maximal Speed of Quantum Propagation}, Lett.\ Math.\ Phys.\ \textbf{111} (2021), 62
	
		\bibitem{BonyFaupSig}  J.-F. Bony,  J. Faupin,  I.M. Sigal,
	\textit{Maximal velocity of photons in non-relativistic QED} Adv.\ Math.\ \textbf{231} (2012), 3054-3078
		
	
	\bibitem{FaupSig}   J. Faupin,  I.M. Sigal, \textit{Minimal velocity bounds in non-relativistic quantum electrodynamics} J.\ Stat.\ Phys.\ \textbf{154} (2014), no.\ 1, 58-90
	
					\bibitem{HeSk} I.~Herbst and E.~Skibsted, \textit{Free channel Fourier transform in the long-range N-body problem} J.\ d'Analyse Math.\ \textbf{65} (1995) 297-332
					
					\bibitem{HS} S.~Huang and A.~Soffer, \textit{Uncertainty principle, minimal escape velocities and observability inequalities for Schrödinger equations}, Amer.\ J.\ Math.\ \textbf{143} (2021), no.\ 3, 753-781
		
			\bibitem{SigSof} I.M.~Sigal and A.~Soffer, \textit{Local decay and propagation estimates for time-dependent and time-independent Hamiltonians} 
	Preprint, Princeton Univ. (1988), {\footnotesize URL: www.math.toronto.edu/sigal/publications/SigSofVelBnd.pdf}
		
			\bibitem{Skib} E.~Skibsted, \textit{Propagation estimates for N-body Schr\"odinger operators}, Comm.\ Math.\ Phys.\ \textbf{142} (1992), 67-98
			
			
			

		\bibitem{Eisert} J.~Eisert, private communication

	
		\bibitem{HunSig1}W.~Hunziker and I.M.~Sigal, \emph{Time-dependent scattering theory of $N$-body quantum systems}, Rev. Math. Phys. \textbf{12} (2000), no.~8, 1033--1084
	
	\bibitem{HunSigSof}  W. Hunziker, I.M. Sigal and A. Soffer, \textit{Minimal escape velocities}, Comm.\ PDE, \textbf{24} (1999), no.\ 11-12, 2279-2295
	
\end{thebibliography}
\end{document}